\newtheorem{thm}{Theorem}[section]
\newtheorem{lemma}[thm]{Lemma}
\newtheorem{remark}{Remark}
\begin{document}

\title{On the Benefits of Network-level Cooperation in IoT Networks with Aggregators} 

\author{Nikolaos Pappas~\IEEEmembership{Member,~IEEE}, Ioannis Dimitriou,
         Zheng Chen~\IEEEmembership{Member,~IEEE}
\thanks{ N. Pappas is with the Department of Science and Technology, Link\"{o}ping University, Norrk\"{o}ping SE-60174, Sweden. (e-mail: nikolaos.pappas@liu.se). I. Dimitriou is with the Department of Mathematics, University of Patras, Patra, Peloponnese, Greece.
(e-mail: idimit@math.upatras.gr). Z. Chen is with the Department of Electrical Engineering, Link\"{o}ping University, 58183 Link\"{o}ping, Sweden.  (e-mail: zheng.chen@liu.se). This work was supported in part by ELLIIT, Center for Industrial Information Technology (CENIIT), and the Swedish Foundation for Strategic Research. This work was presented in part in the 30th International Teletraffic Congress (ITC) 2018 \cite{ITC30}.}
}

\maketitle

\begin{abstract}
In this work, we consider a random access Internet of Things IoT wireless network assisted by two aggregators collecting information from two disjoint groups of sensors. The nodes and the aggregators are transmitting in a random access manner under slotted time, the aggregators perform network-level cooperation for the data collection. The aggregators are equipped with queues to store data packets that are transmitted by the network nodes and relaying them to the destination node. We characterize the throughput performance of the IoT network and we obtain the stability conditions for the queues at the aggregators. We apply the theory of boundary value problems to analyze the delay performance. Our results show that the presence of the aggregators provides significant gains in the IoT network performance, in addition, we provide useful insights regarding the scalability of the IoT network.
\end{abstract}

\begin{IEEEkeywords}
Random Access, IoT, Throughput, Delay, Queueing, Stability, Boundary Value Problems.
\end{IEEEkeywords}

\IEEEpeerreviewmaketitle

\section{Introduction}
The Internet of Things (IoT) is one of the most attractive concepts in the area of information and communication technology. IoT is expected to play an important role in our daily life by supporting massive connectivity with seamless service. It involves the interconnection of many, and possibly heterogeneous objects through the Internet using different communication technologies. The objects are equipped with communications capabilities and can vary from sensors, smart objects, etc. \cite{StankovicIoT2014, KhanWirComMag2017, LenkaACCESS2018}.

The total number of IoT connections is expected to grow tremendously the next years. The application of random access protocols in the IoT networks can potentially mitigate the co-channel interference caused by massive amount of IoT devices with low signaling overhead \cite{HasanComMag2013}.  
To support the massive connectivity in future IoT networks, practical techniques are required to collect data from a large set of devices and the traditional orthogonal multi-access schemes are not efficient. The works in \cite{MalakTCOM2016, HsuICC2017, GuoTCOM2017, LopezTWC2018, SalamAccess2018, ZhangTVT2018, HattabGC2018, Rahman2018QoS} have considered data aggregation under different scenarios and have evaluated the benefits of such technique. The work in \cite{Globecom2017Latency} proposes a cloud-assisted priority-based channel access and data aggregation scheme for irregularly deployed sensor nodes to minimize the network latency and to enhance the system reliability of IoT networks. The work in \cite{ZhangTWC2019} considers an uplink IoT network where a cellular user operating as aggregator can assist IoT devices. IoT can leverage the unlicensed spectrum for machine-to-machine communications. The problem of spectrum allocation and device association in uplink two-hop IoT networks is studied in \cite{IbrahimIoT2019}.

Network-level cooperation introduced in \cite{SadekTIT2007} and \cite{RongTIT12} can be an effective alternative method for data aggregation. It is plain relaying without any physical layer considerations and it has been shown to provide large gain in terms of throughput and delay performance.  
Recently, several works have investigated relaying at the network level \cite{PappasTWC2015,PappasGlobalsip2013, pap, PappasJCN2016, DimitriouASMTA2016, NomikosSurvey2016,dim3, PappasJCN2018, DIMITRIOU2018, CristianGC2018, Dimitriou2019AdHoc}. The deployment of aggregators under network-level cooperation can improve the throughput and delay in IoT networks \cite{KimTVT2018}. However, due to the queueing delay, the stability conditions at the aggregator queues need to be considered.

In this work, we consider a random access IoT network with two aggregators which can receive and forward data in form of packets from the network nodes. 
The nodes transmit in a random access manner, which is a common assumption in IoT technologies such as LoRa \cite{LoRa2016}. The aggregators use network-level cooperation and their receiving and transmitting modules are operating in different frequency bands, thus, they have out-of-band full duplex capabilities. Furthermore, we assume multi-packet reception (MPR) capabilities at the receivers. MPR is suitable to capture the SINR (Signal-to-Interference-plus-Noise-Ratio) based model and is more realistic to model the wireless transmission. Similar assumptions to our work can be found in \cite{MalakTCOM2016, SalamAccess2018, KimTVT2018}.

\subsection{Contributions}

The contributions of our work can be summarized as follows. 
Our primary goal is to provide support for data collection in IoT networks by applying network-level cooperation. We provide the throughput analysis of an IoT network consisting of sensors that are assisted by two aggregators, from which we can gain insights on the scalability of the considered network. In addition, we study the stability conditions for the queues at the aggregators, which guarantee finite queueing delay. The main difficulty for characterizing the stability conditions lies on the interaction of the queues, which arises when the service rate of a queue depends on the state of the other. The technique of stochastic dominance is utilized to bypass this difficulty and characterize necessary and sufficient stability conditions for the queues at the aggregators.

Furthermore, we study the average delay of the packets possibly received and forwarded by the aggregators. Our system is modeled as a two-dimensional discrete time Markov chain, and we show that the generating function of the stationary joint queue length distribution can be obtained by solving a fundamental functional equation with the theory of boundary value problems. For the general MPR case, we obtain a lower and an upper bound for the average delay. In addition, we characterize in closed form expression the gap between the lower and the upper bound. These bounds as it is also seen in the numerical results appear to be tight. For the model with capture effect, i.e., a subclass of MPR model, we provide the explicit expression for the average delay. The analysis in this work can act as a framework for other research directions that involve multiple aggregators with interacting queues.

The remainder of the paper is organized as follows: Section \ref{sec:model} describes the system model and in Section \ref{sec:anal1} we present the analysis related to the network-wide throughput. In Sections \ref{anal} and \ref{sec:delaysym}, we provide the analysis for the average delay per packet. The numerical evaluation of the theoretical results is presented in Section \ref{sec:Res}, and Section \ref{sec:conclusions} concludes the paper.

\section{System Model} \label{sec:model}

\subsection{Network Layer Model}
We consider a wireless network consisting of IoT nodes/sensors/objects, which intend to communicate to a common destination/sink $D$, and two aggregators, denoted by $R_{1}$ and $R_{2}$, which can help aggregating and relaying messages from the IoT nodes to $D$. The network model is depicted in Fig. \ref{fig:model}. The nodes are located in two non-overlapping regions. There are in total $M_1$ and $M_2$ nodes within the service range of the aggregators $R_1$ and $R_2$, respectively.
Note that the transmissions in the first region cannot be overheard by the nodes in the second region and vice versa. This can be done by carefully planning for the placement of the aggregators in order to increase the coverage area without interfering with each other. \textit{In the following, we will use the terms nodes, sensors, and objects interchangeably}.

The sensors intend to transmit packets to the destination node $D$ and they are assumed to be saturated, i.e., they always have packets to transmit. In case the transmission from a sensor to $D$ fails, the aggregator can help relaying the message to $D$ and the aggregators do not generate their own traffic. We consider using \textit{network-level cooperation} at the aggregators \cite{SadekTIT2007, RongTIT12}, which means that the aggregators are cooperating as relays in a decode-and-forward manner. The packets are assumed to have equal length and the time is divided into slots, which corresponds to the transmission time of a packet. We assume that the sensors access the wireless channel randomly without any coordination among them. We consider a full multi-packet reception (MPR) channel model, which allows the receivers to successfully receive more than one packets when there are multiple transmissions in the same slot \cite{verdu}. As the result, the sink node $D$ can receive information simultaneously from the sensors and the aggregators. Note that when all nodes are transmitting, we can have in total up to $M_1+M_2$ interfering devices at the sink. This assumption is common in the literature \cite{MalakTCOM2016, KimTVT2018, ZhangTVT2018}.

We assume that different frequency bands are allocated for the transmissions from the sensors and the aggregators, thus, there is no interference between them. On the other hand, the transmission of a node creates interference to the other nodes of the same kind, i.e., there is interference between the sensors, and between the aggregators. The transmitting and receiving units of the aggregators are operating in different channels/frequency bands to avoid self-interference, which can be considered as \textit{out-of-band full duplex mode} \cite{ramakrishnan2016design}.
The aggregators are equipped with queues that store possible packets from the sensors that failed to reach the destination. The queues are assumed to have infinite capacity, thus there is no packet dropping.\footnote{In practice, the buffers have limited size, which is usually quite large. Our analysis based on the infinite buffer size assumption can capture this scenario with minor modifications and it is more general. Furthermore, our analysis can provide design guidelines for selecting the appropriate queue size in order to achieve specific performance requirements regarding throughput and delay. The arrival and service rates for the queues are defined in Section \ref{sec:anal1}.} This is a common assumption in the literature, in the IoT context \cite{KimTVT2018}.

\begin{figure}[!ht]
\centering
\includegraphics[scale=0.25]{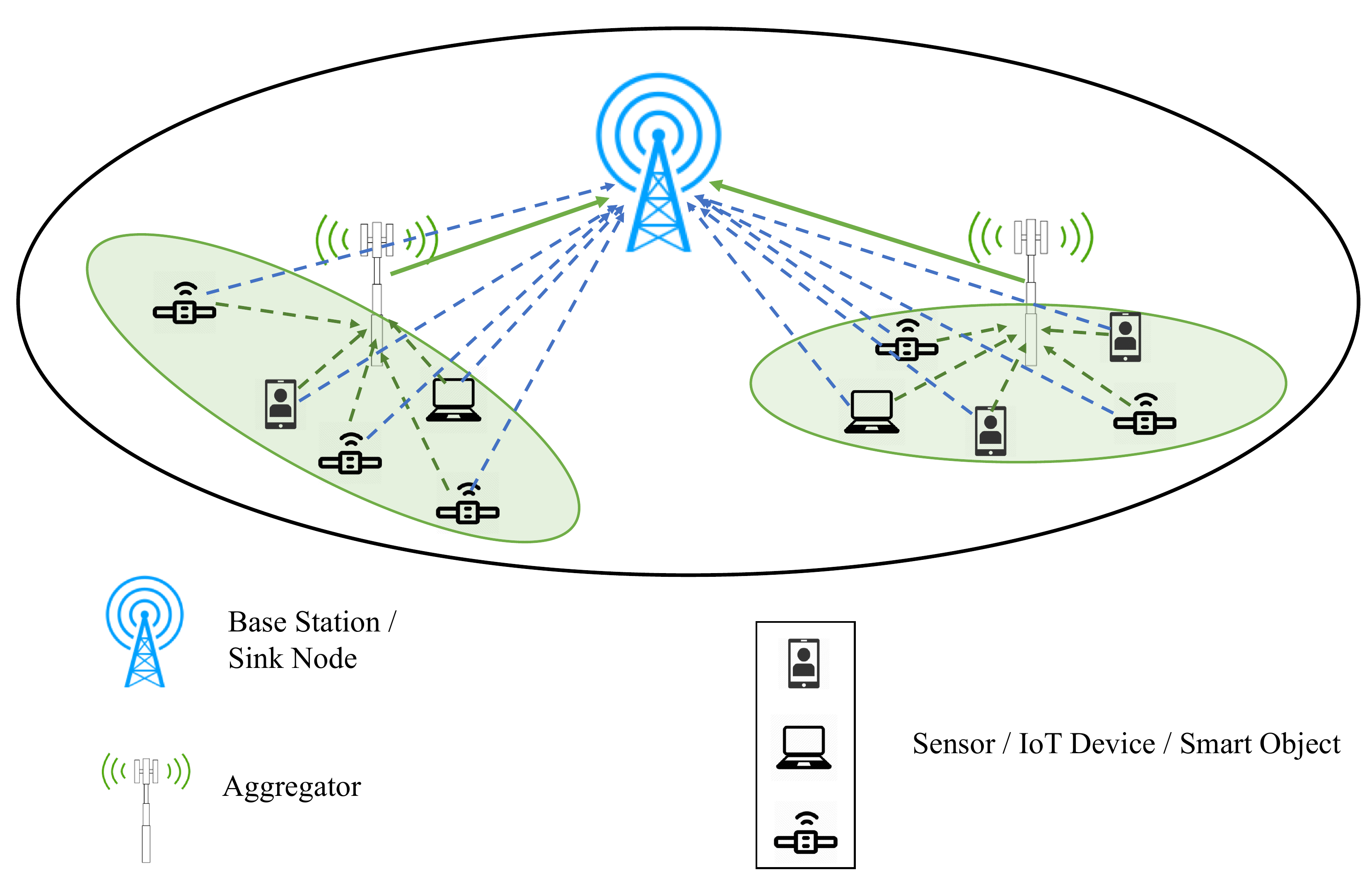}\vspace{-0.1in}
\caption{The IoT network considered in this work. The nodes are assisted by two aggregators that are equipped with queues, and they are operating in an out-of-band full duplex mode.}
\label{fig:model}
\end{figure}

\subsection{Physical Layer Model} \label{sec:PHY}
At the beginning of a timeslot, sensor nodes that belong to the coverage area of $R_i$, attempt to transmit with probability $t_i$, $i=1,2$. The aggregator $R_{i}$ will attempt to transmit a packet with probability $\alpha_{i}$, if it has a non-empty queue. Note that we assume that all the sensors in the same area transmit with the same probability. Our analysis can be easily extended to handle the general case where each node has different transmit probabilities however, the expressions will be cumbersome, and it will be difficult to extract meaningful insights.

We assume that a packet transmitted by sensor $s$ in the first coverage area is successfully received by its aggregator $R_1$ if and only if ${\rm SINR}(s,R_1)\geq \gamma$, where $\gamma$ is the SINR threshold. The wireless channel is subject to fading; let $P_{tx}(s)$ be the transmit power at sensor $s$ and $r(s,R_1)$ be the distance between the sensor and the aggregator. The received power at $R_1$, when $s$ transmits is $P_{rx}(s,R_1)=A(s,R_1)h(s,R_1)$ where $A(s,R_1)$ is a random variable representing small-scale fading. Under Rayleigh fading assumption, $A(s,R_1)$ is exponentially distributed. The received power factor $h(s,R_1)$ is given by $h(s,R_1) = P_{tx}(s)(r(s,R_1))^{-\theta}$ where $\theta$ is the path loss exponent. Then, the success probability between a sensor in the first coverage area and its aggregator is denoted by $P^{R_1}_{{i}}$, when there are $i$ sensors from the same area transmitting in a timeslot, and the expression is given by \cite{PappasTWC2015} 
\begin{equation}
\begin{aligned}
\label{eq:succprob}
P^{R_1}_{{i}}=\exp\left(-\frac{\gamma \eta_{R_1}}{v(s,R_1)h(s,R_1)}\right) \prod_{k\in \mathcal{T}\backslash \left\{s\right\}}{\left(1+\gamma\frac{v(k,R_1)h(k,R_1)}{v(s,R_1)h(s,R_1)}\right)}^{-1},
\end{aligned}
\end{equation}
where $\mathcal{T}$ is the set of transmitting sensors in the first area and $|\mathcal{T}|=i$; $v(s,R_1)$ is the parameter of the Rayleigh fading random variable, $\eta_{R_1}$ is the receiver noise power at $R_1$.

Note that transmitting sensors from the other coverage area do not create interference at the aggregator. 
However, the concurrently transmitting sensors from both areas interfere with each other at the destination $D$. Denote by $P^{1D}_{{i},{j}}$ the success probability to the sink from a sensor in coverage area $1$ when there are $i$ active transmitters from area 1 and $j$ active transmitters from area 2. Similarly we can define $P^{2D}_{{i},{j}}$.

A packet transmission from a sensor in the first area fails to reach the destination with probability $\bar{P}^{1D}_{{i},{j}}=1-P^{1D}_{{i},{j}}$, when there are $i$ active sensors in the first area and $j$ active sensors in the second area. In this case, that packet will be stored in the queue of aggregator $R_{1}$ with probability $P^{R_1}_{{i}}$. Otherwise, with probability $\bar{P}^{R_1}_{{i}}$, the aggregator fails to decode that packet and it has to be re-transmitted by the sensor in a future time slot.

Recall that if there are stored packets in the queues of the aggregator $R_{i}$, $i=1,2$, then $R_i$ transmits a packet with probability $\alpha_{i}$. If only one aggregator $R_{i}$ is active, then the packet will be successfully transmitted to $D$ with probability $p_{R_{i},\{R_{i}\}}^{D}$. If both aggregators transmit simultaneously, then with probability $p_{R_i/R_1,R_2}^{D}$ the packet from $R_i$ is successfully received by node $D$. If a transmitted packet from an aggregator fails to reach the $D$, that packet remains in the queue and will be retransmitted in a later time slot.

\section{Throughput and Stability Analysis} \label{sec:anal1}

In this section, we characterize the network throughput performance and provide the stability conditions for the queues at the aggregators.

\subsection{Throughput Analysis}
The throughput per node consists of the direct throughput from each sensor to the destination and the throughput contributed by the aggregator.
Recall that the devices that are in coverage from the first aggregator cannot cause interference at the receiver of the second aggregator. Moreover, the devices that are covered by the $i$-th aggregator, $R_i$, are transmitting with probability $t_{i}$, for $i=1,2$.

The direct throughput from a sensor in the first coverage area to the sink is given by
\begin{equation}
\begin{aligned}
T_{1,D}=\sum_{i=0}^{M_1-1}\sum_{j=0}^{M_2}{M_1-1 \choose i}{M_2 \choose j} {t_1^{i+1}\overline{t}_1^{M_1-i-1}} {t_2^{j}\overline{t}_2^{M_2-j}} P^{1D}_{{i+1},{j}}.
\end{aligned}
\end{equation}
Note that the direct throughput in this setup is equivalent to the throughput in the network without aggregators.

The contributed throughput from a sensor to the aggregator in the first coverage area is given by
\begin{equation}
\begin{aligned}
T_{1,R_1}=\sum_{i=0}^{M_1-1}\sum_{j=0}^{M_2}{M_1-1 \choose i}{M_2 \choose j} {t_1^{i+1}\overline{t}_1^{M_1-i-1}} {t_2^{j}\overline{t}_2^{M_2-j}} \overline{P}^{1D}_{{i+1},{j}}P^{R_1}_{i+1}.
\end{aligned}
\end{equation}
The total throughput seen by a sensor in the first coverage area is 
$T_1=T_{1,D}+T_{1,R_1}$. Similarly, we can obtain the throughput seen by a sensor located in the second coverage area which is assisted by the second aggregator $R_2$.
The percentage of throughput 

\begin{remark}
The term $T_{1,D}$ can also be interpreted as the probability that a transmitted packet from a sensor in the first coverage area reaches the destination directly. Furthermore, $T_{1,R_1}$ is the probability of unsuccessful transmission from a sensor to the destination while the packet is received by the aggregator. The percentage of a sensor's traffic that is being relayed is $\frac{T_{1,R_1}}{T_1}$.
\end{remark}

In addition, we need to characterize the average arrival rates at the aggregators, denoted by $\lambda_1$ and $\lambda_2$. Since we assume full MPR capability at the receivers, the $i$-th aggregator can receive up to $N_i$ packets in a timeslot.
We define $l_{k,i}$ as the probability that $k$ packets will arrive in a timeslot at the $i$-th aggregator. 
The average arrival rate at the $i$-th aggregator is given by
\begin{equation} \label{eq:lambdaju}
\lambda_{i} = \sum_{k=1}^{M_i} k l_{k,i}, \text{ }i=1,2.
\end{equation}
The probability $l_{k,1}$ where $1 \leq k \leq N_1$ is given by
\begin{equation}
l_{k,1}=\sum_{s=k}^{M_1}\sum_{m=0}^{M_2}{M_1 \choose s}{s \choose k}{M_2 \choose m} {t_1^{s}\overline{t}_1^{M_1-s}} {t_2^{m}\overline{t}_2^{M_2-m}}  \left( \overline{P}^{1D}_{{s},{m}}P^{R_1}_{s} \right)^{k} \left(P^{1D}_{{s},{m}} + \overline{P}^{1D}_{{s},{m}}\overline{P}^{R_1}_{s} \right)^{s-k}.
\end{equation}

Similarly we can obtain $l_{k,2}$.
The network-wide throughput is $T=M_1 T_1+M_2 T_2$. The previous expressions for the throughput are valid when the queues at the aggregators are stable.

\begin{remark}
Previously we considered the case where the queues are stable, but in order to obtain the throughput when the queues are not stable we needs to replace the sum of expressions $T_{i,R_i}$ with the service rate of the aggregator. In this case the network-wide throughput will be given by $T=M_1 T_{1,D}+M_2 T_{2,D}+\mu_1+\mu_2$.
\end{remark}

\subsection{Stability Analysis at the Aggregators} \label{sec:stability_conditions}
The average service rate for the aggregator $i$ is given by
\begin{equation} \label{eq:mu}
\begin{array}{l}
\mu_i=Pr(N_j \neq 0) \left[\alpha_i\bar{\alpha}_{j}p_{R_{i},\{R_{i}\}}^{D}+\alpha_i \alpha_j p_{R_i/R_i,R_j}^{D}\right] + Pr(N_j = 0) \alpha_{i} p_{R_{i},\{R_{i}\}}^{D},j=i\text{ }\mathrm{mod}2+1,
\end{array}
\end{equation}
where $N_{j}$ is the queue size at queue $j$.
The notation for the success probabilities used in (\ref{eq:mu}) is the one introduced in Section \ref{sec:model}.
The theorem below, provides the stability conditions for the considered setup.

\begin{thm}\label{thm1}
The stability conditions for the queues at the aggregators for fixed transmission probabilities $\alpha_{i}, i=1,2$, are described by the region $\mathcal{R}=\mathcal{R}_1 \cup \mathcal{R}_2$ where $\mathcal{R}_i$ is given by 
\begin{align} \label{eq:stabCond}
\mathcal{R}_i = \left\lbrace(\lambda_1,\lambda_2):\lambda_i \leq \alpha_i p_{R_{i},\{R_{i}\}}^{D} - \frac{\alpha_i \alpha_j(p_{R_{i},\{R_{i}\}}^{D}-p_{R_i/R_i,R_j}^{D})}{\alpha_j\bar{\alpha}_ip_{R_{j},\{R_{j}\}}^{D}+\alpha_i \alpha_j p_{R_j/R_i,R_j}^{D}}\lambda_j, \right. \\ \nonumber \left.  \lambda_j \leq \alpha_j\bar{\alpha}_ip_{R_{j},\{R_{j}\}}^{D}+\alpha_i \alpha_jp_{R_j/R_i,R_j}^{D} \right\rbrace, j=i\text{ }\mathrm{mod}2+1.
\end{align}
\end{thm}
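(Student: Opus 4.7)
The plan is to apply the stochastic dominance technique of Rao and Ephremides, since a direct application of Loynes' theorem is blocked by the coupling in (\ref{eq:mu}): the service rate $\mu_i$ depends on $\Pr(N_j\neq 0)$, which is itself determined by $\mu_j$ and $\lambda_j$, so the two queues are coupled through their stationary occupancy probabilities.

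To decouple them, I would introduce, for each $i\in\{1,2\}$, a dominant system in which $R_i$ attempts a transmission with probability $\alpha_i$ in \emph{every} slot, sending a ``dummy'' packet when its queue is empty. In Dominant System $i$, the server of $R_i$ is active with probability $\alpha_i$ irrespective of its own queue state, so from $R_j$'s viewpoint the service rate collapses to the state-independent constant $\mu_j=\alpha_j\bar{\alpha}_i p_{R_{j},\{R_{j}\}}^{D}+\alpha_i\alpha_j p_{R_j/R_i,R_j}^{D}$; Loynes' theorem yields $\lambda_j\leq\mu_j$ as the stability condition for $R_j$, with $\Pr(N_j\neq 0)=\lambda_j/\mu_j$ in steady state. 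Substituting this ratio into (\ref{eq:mu}) for $\mu_i$ and cancelling terms, the service rate of $R_i$ reduces to the constant $\alpha_i p_{R_{i},\{R_{i}\}}^{D}-\alpha_i\alpha_j\bigl(p_{R_{i},\{R_{i}\}}^{D}-p_{R_i/R_i,R_j}^{D}\bigr)\lambda_j/\mu_j$, and a second application of Loynes reproduces exactly the region $\mathcal{R}_i$ of (\ref{eq:stabCond}). Sufficiency of $\mathcal{R}_1\cup\mathcal{R}_2$ then follows because the dummy packets can only decrease the service rate at the opposite queue, so the dominant-system queue lengths pathwise dominate those of the original.

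For necessity I would invoke the classical indistinguishability argument: whenever the queue at $R_i$ in Dominant System $i$ is non-empty, no dummy packet is ever generated, so over those slots the dominant and original sample paths coincide. If $R_i$ were unstable in the dominant system, its queue would eventually remain non-empty forever, the dummy packets would never fire, and the two systems would be dynamically identical, forcing the original to be unstable as well. By contraposition, stability of the original implies stability of at least one of the two dominant systems, and hence the stability region is contained in $\mathcal{R}_1\cup\mathcal{R}_2$. The main obstacle I expect is making this coupling argument rigorous on the boundary $\lambda_j=\mu_j$; by comparison, the decoupling inside each dominant system and the algebraic simplification that yields the explicit form of $\mathcal{R}_i$ are routine once the dummy-packet construction is in place.
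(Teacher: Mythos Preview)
Your proposal is correct and follows essentially the same approach as the paper: construct two dominant systems via dummy-packet transmissions at one aggregator, apply Loynes' criterion first to the decoupled queue and then to the other after substituting the resulting emptiness probability, and close with the Rao--Ephremides indistinguishability argument for necessity. The paper's proof is identical in structure and in the algebraic steps; your only addition is an explicit remark on the boundary case, which the paper subsumes under the standard indistinguishability reasoning.
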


\begin{proof}
The proof is given in Appendix \ref{sec:stability_proof}.
\end{proof}

\section{Preliminary Analysis}\label{anal}
Let $N_{k,n}$ be the number of packets in the buffer of aggregator $R_{k}$, $k=1,2$, at the beginning of the $n$th slot. Then, $Y_{n}=(N_{1,n},N_{2,n})$ is a discrete time Markov chain with state space $E=\{(i,j):i,j=0,1,2,...\}$. The queues of both aggregators evolve as follows:
\begin{equation}
\begin{array}{c}
N_{k,n+1}=[N_{k,n}+F_{k,n}]^{+},\,k=1,2,
\end{array}
\label{x}
\end{equation}
where $F_{k,n}$ is the the difference of the number of packets that enter the buffer of the $k$th aggregator at the beginning of slot $n$ ($F_{k,n}$ equals $0$ or $\pm 1$), and $[x]^{+}:=max(0,x)$. 

Before proceeding with the analysis, we will slightly modify the notation for the success probabilities presented in Section \ref{sec:model} in order to be more convenient for the delay analysis. If two sensors transmit a packet simultaneously, $p_{i,\{1,2\}}^{D}$ represents the probability that the packet from sensor $i$ is successfully received by node $D$ and the packet from sensor $j=i mod 2+1$ failed to be received by $D$. $p_{1,2,\{1,2\}}^{D}$ represents the probability that the packets from both nodes are successfully received by $D$. Then we have $\bar{p}_{\{1,2\}}^{D}=1-p_{1,2,\{1,2\}}^{D}-p_{1,\{1,2\}}^{D}-p_{2,\{1,2\}}^{D}$,which denotes the probability that both packets fail to be received by the node $D$. If both aggregators transmit simultaneously, then with probability $p_{R_{k},\{R_{1},R_{2}\}}^{D}$ the packet from $R_{k}$ is successfully received by node $D$, with probability $p_{R_{1},R_{2},\{R_{1},R_{2}\}}^{D}$ the packets from both aggregators are successfully received by $D$, while with probability $\bar{p}_{\{R_1,R_2\}}^{D}=1-p_{R_{1},R_{2},\{R_{1},R_{2}\}}^{D}-p_{R_{1},\{R_{1},R_{2}\}}^{D}-p_{R_{2},\{R_{1},R_{2}\}}^{D}$, both of them failed to be received by $D$.

Let $H(x,y)$ be the generating function of the joint stationary queue process,
\begin{displaymath}
\begin{array}{c}
H(x,y)=\lim_{n\to\infty}E(x^{N_{1,n}}y^{N_{2,n}}),\,|x|\leq1,|y|\leq1.
\end{array}
\end{displaymath}
Then, by exploiting (\ref{x}) we obtain after lengthy calculations 
\begin{equation}
\begin{array}{r}
R(x,y)H(x,y)=A(x,y)H(x,0)+B(x,y)H(0,y)+C(x,y)H(0,0),
\end{array}
\label{we}
\end{equation}
where, 
\begin{displaymath}
\begin{array}{rl}
R(x,y):=1-L(x,y)[1-\alpha_{1}\widehat{\alpha}_{2}(1-\frac{1}{x})-\widehat{\alpha}_{1}\alpha_{2}(1-\frac{1}{y})-\alpha_{1}\alpha_{2}p_{R_{1},R_{2},\{R_{1},R_{2}\}}^{D}(1-\frac{1}{xy})],
\end{array}
\end{displaymath}
is the \textit{kernel} of functional equation (\ref{we}) and
\begin{displaymath}
\begin{array}{rl}
A(x,y)=&L(x,y)[d_{1}(1-\frac{1}{x})+\alpha_{2}\widehat{\alpha}_{1}(1-\frac{1}{y})+\alpha_{1}\alpha_{2}p_{R_{1},R_{2},\{R_{1},R_{2}\}}^{D}(1-\frac{1}{xy})],\\
B(x,y)=&L(x,y)[d_{2}(1-\frac{1}{y})+\alpha_{1}\widehat{\alpha}_{2}(1-\frac{1}{x})+a_{1}a_{2}p_{R_{1},R_{2},\{R_{1},R_{2}\}}^{D}(1-\frac{1}{xy})],\\
C(x,y)=&L(x,y)[d_{1}(\frac{1}{x}-1)+d_{2}(\frac{1}{y}-1)+\alpha_{1}\alpha_{2}p_{R_{1},R_{2},\{R_{1},R_{2}\}}^{D}(\frac{1}{xy}-1)],\\
L(x,y)=&1-(1-x)t_{1}[\bar{t}_{2}\bar{p}_{1,\{1\}}^{D}p_{1,\{1\}}^{R_{1}}+t_{2}(\bar{p}_{\{1,2\}}^{D}+p_{2,\{1,2\}}^{D})p_{1,\{1,2\}}^{R_{1}}]+(1-y)t_{2}[\bar{t}_{1}\bar{p}_{2,\{2\}}^{D}p_{2,\{2\}}^{R_{2}}\\&+t_{1}(\bar{p}_{\{1,2\}}^{D}+p_{1,\{1,2\}}^{D})p_{1,\{1,2\}}^{R_{2}}]+(1-x)(1-y)t_{1}t_{2}\bar{p}_{\{1,2\}}^{D}p_{1,\{1,2\}}^{R_{1}}p_{2,\{1,2\}}^{R_{2}},
\end{array}
\end{displaymath}
\begin{displaymath}
\begin{array}{rl}
\widehat{\alpha}_{k}=&\bar{\alpha}_{k}p_{R_k,\{R_k\}}^{D}+\alpha_{k}p_{R_k,\{R_1,R_2\}}^{D},\,k=1,2,\\
d_{1}=&\alpha_{1}(\widehat{\alpha}_{2}-p_{R_{1},\{R_{1}\}}^{D}),
d_{2}=\alpha_{2}(\widehat{\alpha}_{1}-p_{R_{2},\{R_{2}\}}^{D}).
\end{array}
\end{displaymath}
Clearly, for every fixed $y$ with $|y|\leq 1$, $H(x,y)$ it is regular in $x$ for $|x|<1$, and continuous in $x$ for $|x|\leq1$; similar observation hold for the variable $y$.

Some interesting relations are directly obtained using (\ref{we}). In particular, by setting in (\ref{we}) $y=1$, dividing with $x-1$, and taking the limit $x\to 1$, by using the L'Hospital rule, and vice-versa we obtain the following conservation of flow relations:
\begin{equation}
\begin{array}{rl}
\lambda_{1}=&\alpha_{1}\tilde{\alpha}_{2}[1-H(1,0)-H(0,1)+H(0,0)]+\alpha_{1}p_{R_{1},\{R_{1}\}}^{D}[H(1,0)-H(0,0)],\\
\lambda_{2}=&\alpha_{2}\tilde{\alpha}_{1}[1-H(1,0)-H(0,1)+H(0,0)]+\alpha_{2}p_{R_{2},\{R_{2}\}}^{D}[H(0,1)-H(0,0)],
\end{array}
\label{con}
\end{equation}
where, $\tilde{\alpha}_{k}:=\widehat{\alpha}_{k}+\alpha_{k}p_{R_{1},R_{2},\{R_{1},R_{2}\}}^{D}$, $k=1,2,$
\begin{displaymath}
\begin{array}{l}
\lambda_{1}:=t_{1}\bar{t}_{2}\bar{p}_{1,\{1\}}^{D}p_{1,\{1\}}^{R_{1}}+t_{1}t_{2}(\bar{p}_{1,2,\{1,2\}}^{D}+p_{2,\{1,2\}}^{D})p_{1,\{1,2\}}^{R_{1}},\\
\lambda_{2}:=t_{2}\bar{t}_{1}\bar{p}_{2,\{2\}}^{D}p_{2,\{2\}}^{R_{2}}+t_{1}t_{2}(\bar{p}_{1,2,\{1,2\}}^{D}+p_{1,\{1,2\}}^{D})p_{2,\{1,2\}}^{R_{2}}.
\end{array}
\end{displaymath}
Note that the previous equations are the same with the ones obtained in the previous section, but here we use the more convenient notation for the delay analysis regarding the success probabilities. In order to facilitate presentation we present the case of two nodes, but clearly the analysis holds for the general case of $N$ nodes, just by replacing the right parts of $\lambda_{1}$ and $\lambda_{2}$.

The expressions in (\ref{con}), equate the flow of jobs into an aggregator, with the flow of jobs out of the aggregator. Looking carefully at (\ref{con}) it is readily seen that the following analysis is distinguished in two cases:
\begin{enumerate}
\item $\frac{\alpha_{1}\tilde{\alpha}_{2}}{\alpha_{1}p_{R_{1},\{R_{1}\}}^{D}}+\frac{\alpha_{2}\tilde{\alpha}_{1}}{\alpha_{2}p_{R_{2},\{R_{2}\}}^{D}}=1$. Then, $H(0,0)=1-\frac{\lambda_{1}}{\alpha_{1}p_{R_{1},\{R_{1}\}}^{D}}-\frac{\lambda_{2}}{\alpha_{2}p_{R_{2},\{R_{2}\}}^{D}}=1-\rho$.
\item $\frac{\alpha_{1}\tilde{\alpha}_{2}}{\alpha_{1}p_{R_{1},\{R_{1}\}}^{D}}+\frac{\alpha_{2}\tilde{\alpha}_{1}}{\alpha_{2}p_{R_{2},\{R_{2}\}}^{D}}\neq1$. Then, (\ref{con}) yields $H(1,0)=\frac{\alpha_{1}\tilde{\alpha}_{2}(\lambda_{2}-\alpha_{2}p_{R_{2},\{R_{2}\}}^{D})-\tilde{d}_{2}(\lambda_{1}+H(0,0)\alpha_{1}p_{R_{1},\{R_{1}\}}^{D})}{\tilde{d}_{1}\tilde{d}_{2}-\alpha_{1}\alpha_{2}\tilde{\alpha}_{1}\tilde{\alpha}_{2}},$ and $H(0,1)=\frac{\alpha_{2}\tilde{\alpha}_{1}(\lambda_{1}-\alpha_{1}p_{R_{1},\{R_{1}\}}^{D})-\tilde{d}_{1}(\lambda_{2}+H(0,0)\alpha_{2}p_{R_{2},\{R_{2}\}}^{D})}{\tilde{d}_{1}\tilde{d}_{2}-\alpha_{1}\alpha_{2}\tilde{\alpha}_{1}\tilde{\alpha}_{2}},$
where $\tilde{d}_{1}=\alpha_{1}(\tilde{\alpha}_{2}-p_{R_{1},\{R_{1}\}}^{D})$, $\tilde{d}_{2}=a\alpha_{2}(\tilde{\alpha}_{1}-p_{R_{2},\{R_{2}\}}^{D})$.
\end{enumerate}
\subsection{Kernel analysis}
The kernel $R(x,y)$ plays a crucial role in the following analysis and here we provide some important properties. For convenience, assume in the following that $p_{R_{1},R_{2},\{R_{1},R_{2}\}}^{D}=0$. It is readily seen that
\begin{displaymath}
R(x,y)=a(x)y^{2}+b(x)y+c(x)=\widehat{a}(y)x^{2}+\widehat{b}(y)x+\widehat{c}(y),
\end{displaymath}
where, for $L=t_{1}t_{2}\bar{p}_{\{1,2\}}^{D}p_{1,\{1,2\}}^{R_{1}}p_{2,\{1,2\}}^{R_{2}}$,
\begin{displaymath}
\begin{array}{rl}
\widehat{a}(y)=&L(1-\alpha_{1}\widehat{\alpha}_{2}-\alpha_{2}\widehat{\alpha}_{1})y^{2}+y[\lambda_{1}(a_{1}\widehat{\alpha}_{2}+\alpha_{2}\widehat{\alpha}_{1}-1)\\&+L(\alpha_{1}\widehat{\alpha}_{2}+2\alpha_{2}\widehat{\alpha}_{1}-1)]-\alpha_{2}\widehat{\alpha}_{1}(\lambda_{1}+L),\\
\widehat{b}(y)=&y^{2}[\lambda_{2}(\alpha_{1}\widehat{\alpha}_{2}+\alpha_{2}\widehat{\alpha}_{1}-1)+L(2\alpha_{1}\widehat{\alpha}_{2}+\alpha_{2}\widehat{\alpha}_{1}-1)]\\
&+y[\lambda_{1}(1-2\alpha_{1}\widehat{\alpha}_{2}-\alpha_{2}\widehat{\alpha}_{1})+\lambda_{2}(1-2\widehat{\alpha}_{1}\alpha_{2}-\alpha_{1}\widehat{\alpha}_{2})\\
&+L(1-2(\widehat{\alpha}_{1}\alpha_{2}+\alpha_{1}\widehat{\alpha}_{2}))+\alpha_{1}\widehat{\alpha}_{2}+\alpha_{2}\widehat{\alpha}_{1}]\\&+\widehat{\alpha}_{1}\alpha_{2}(\lambda_{1}+\lambda_{2}+L-1),\\
\widehat{c}(y)=&\alpha_{1}\widehat{\alpha}_{2}y[\lambda_{1}-1+(\lambda_{2}+L)(1-y)],\\
a(x)=&L(1-\alpha_{1}\widehat{\alpha}_{2}-\alpha_{2}\widehat{\alpha}_{1})x^{2}+x[\lambda_{2}(\alpha_{1}\widehat{\alpha}_{2}+\alpha_{2}\widehat{\alpha}_{1}-1)\\&+L(2\alpha_{1}\widehat{\alpha}_{2}+\alpha_{2}\widehat{\alpha}_{1}-1)]-\alpha_{1}\widehat{\alpha}_{2}(\lambda_{2}+L),\\
b(x)=&x^{2}[\lambda_{1}(\alpha_{1}\widehat{\alpha}_{2}+\alpha_{2}\widehat{\alpha}_{1}-1)+L(2\alpha_{2}\widehat{\alpha}_{1}+\alpha_{1}\widehat{\alpha}_{2}-1)]\\
&+x[\lambda_{1}(1-2\alpha_{1}\widehat{\alpha}_{2}-\alpha_{2}\widehat{\alpha}_{1})+\lambda_{2}(1-2\widehat{\alpha}_{1}\alpha_{2}-\alpha_{1}\widehat{\alpha}_{2})\\
&+L(1-2(\widehat{\alpha}_{1}\alpha_{2}+\alpha_{1}\widehat{\alpha}_{2}))+\alpha_{1}\widehat{\alpha}_{2}+\alpha_{2}\widehat{\alpha}_{1}]\\&+\widehat{\alpha}_{2}\alpha_{1}(\lambda_{1}+\lambda_{2}+L-1),\\
c(x)=&\alpha_{2}\widehat{\alpha}_{1}x[\lambda_{2}-1+(\lambda_{1}+L)(1-x)].
\end{array}
\end{displaymath}
The roots of $R(x,y)=0$ are $X_{\pm}(y)=\frac{-\widehat{b}(y)\pm\sqrt{D_{y}(y)}}{2\widehat{a}(y)}$, $Y_{\pm}(x)=\frac{-b(x)\pm\sqrt{D_{x}(x)}}{2a(x)}$, where $D_{y}(y)=\widehat{b}(y)^{2}-4\widehat{a}(y)\widehat{c}(y)$, $D_{x}(x)=b(x)^{2}-4a(x)c(x)$.

\begin{lemma}\label{LEM}
For $|y|=1$, $y\neq1$, $R(x,y)=0$ has exactly one root $x=X_{0}(y)$ such that $|X_{0}(y)|<1$. When $\lambda_{1}<\alpha_{1}\widehat{\alpha}_{2}$, $X_{0}(1)=1$. Similarly, $R(x,y)=0$ has exactly one root $y=Y_{0}(x)$, such that $|Y_{0}(x)|\leq1$, for $|x|=1$.
\end{lemma}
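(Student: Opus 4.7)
The plan is to prove the lemma by applying Rouché's theorem to the kernel on the unit circle $|x|=1$. For fixed $y$ with $|y|=1$, $y\neq 1$, view $R(x,y)$ as a meromorphic function of $x$ with a simple pole at $x=0$ and clear it by considering $P(x,y):=xR(x,y)$, which is a polynomial of degree two in $x$. Under the working assumption $p_{R_{1},R_{2},\{R_{1},R_{2}\}}^{D}=0$ adopted for the kernel analysis, a direct rearrangement gives
\[
P(x,y) \;=\; x - L(x,y)\,B(x,y), \qquad B(x,y) \;:=\; \alpha_{1}\widehat{\alpha}_{2} + \Bigl[(1-\alpha_{1}\widehat{\alpha}_{2}-\widehat{\alpha}_{1}\alpha_{2}) + \frac{\widehat{\alpha}_{1}\alpha_{2}}{y}\Bigr]x.
\]
Since $P(0,y)=-\alpha_{1}\widehat{\alpha}_{2}L(0,y)\neq 0$ for $y$ on the unit circle outside a finite exceptional set, every root of $P(x,y)$ in $|x|<1$ is automatically a root of $R(x,y)$, so it is enough to count the zeros of $P$ inside the open unit disk.

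The main analytic step is to verify $|L(x,y)B(x,y)|<1$ strictly on $|x|=1$ whenever $|y|=1$, $y\neq 1$. For the factor $B(x,y)$, the probabilistic structure of the joint service mechanism forces $\alpha_{1}\widehat{\alpha}_{2}+\widehat{\alpha}_{1}\alpha_{2}\le 1$, and the triangle inequality yields
\[
|B(x,y)| \;\le\; \alpha_{1}\widehat{\alpha}_{2} + \bigl|(1-\alpha_{1}\widehat{\alpha}_{2}-\widehat{\alpha}_{1}\alpha_{2}) + \widehat{\alpha}_{1}\alpha_{2}\bar{y}\bigr| \;\le\; \alpha_{1}\widehat{\alpha}_{2} + (1-\alpha_{1}\widehat{\alpha}_{2}) \;=\; 1,
\]
with equality in the inner modulus only when the nonnegative real number $1-\alpha_{1}\widehat{\alpha}_{2}-\widehat{\alpha}_{1}\alpha_{2}$ and the complex number $\widehat{\alpha}_{1}\alpha_{2}\bar{y}$ are collinear, i.e., exactly when $y=1$. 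For the factor $L(x,y)$, recasting it as the joint probability generating function of the per-slot packet arrivals at the two aggregators (the mixed signs in the written form collapse once the transition probabilities of $Y_{n}$ are grouped) gives $|L(x,y)|\le L(|x|,|y|)\le 1$ on the closed bidisk. Multiplying the two bounds yields the desired strict inequality. Comparing $P(x,y)$ with $f(x):=x$, which has the unique zero $x=0$ in $|x|<1$, Rouché's theorem then forces $R(x,y)=0$ to have exactly one root $X_{0}(y)$ in $|x|<1$.

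The boundary value $X_{0}(1)=1$ is obtained by a local analysis at $(1,1)$. The conservation-of-flow identities~(\ref{con}) give $R(1,1)=0$, so $x=1$ is always a root of $R(x,1)=0$; the remaining algebraic branch converges, as $y\to 1$ along $|y|=1$, to a limit selected by the sign of the drift $\lambda_{1}-\alpha_{1}\widehat{\alpha}_{2}$. The assumption $\lambda_{1}<\alpha_{1}\widehat{\alpha}_{2}$ pins this limit at $X_{0}(1)=1$ rather than the exterior branch. The symmetric statement for $Y_{0}(x)$ follows by interchanging the roles of $x$ and $y$. The main hurdle will be the two-part justification of the Rouché inequality: rewriting $L(x,y)$ in a manifestly probabilistic form so that $|L(x,y)|\le 1$ is clear, and handling the finite exceptional set on $|y|=1$ where either $\widehat{a}(y)=0$ (degenerate quadratic) or $L(0,y)=0$ (spurious root at the origin). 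Both exceptional cases can be absorbed by a continuity argument using the analyticity of the algebraic branches $X_{\pm}(y)$ away from the branch points of the discriminant $D_{y}(y)$.
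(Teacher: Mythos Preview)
Your argument is correct and is essentially the paper's proof: both clear the pole of $R$ and apply Rouch\'e's theorem on $|x|=1$, using the probabilistic structure to get the strict inequality. In fact your $L(x,y)B(x,y)$ equals $\Psi(x,y)/y$, where the paper writes $xyR(x,y)=xy-\Psi(x,y)$ and simply observes that $\Psi$ is the generating function of a proper probability distribution, hence $|\Psi(x,y)|<1=|xy|$ on the torus for $y\neq 1$; your separate bounds on $|L|$ and $|B|$ amount to the same thing, just unpacked. The only presentational difference is at $y=1$: the paper factors $R(x,1)=0$ explicitly as $(1-x)\bigl(\lambda_{1}+\lambda_{1}\alpha_{1}\widehat{\alpha}_{2}(1-x)/x-\alpha_{1}\widehat{\alpha}_{2}/x\bigr)=0$ and reads off that the second root lies outside $\bar{D}_{x}$ precisely when $\lambda_{1}<\alpha_{1}\widehat{\alpha}_{2}$, whereas you reach the same conclusion by a continuity/drift argument. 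Your ``exceptional set'' discussion is not needed once the comparison is set up as in the paper.
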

\begin{proof}
See Appendix \ref{a0}.
\end{proof}

The lemma below provides information about the location of the branch points of the two-valued functions $Y(x)$, $X(y)$, its proof is based on algebraic arguments, thus it is omitted.

\begin{lemma}\label{lem1}
The algebraic function $Y(x)$, defined by $R(x,Y(x)) = 0$, has four real branch points $0< x_{1}<x_{2}\leq1<x_{3}<x_{4}$. Moreover, $D_{x}(x)<0$, $x\in(x_{1},x_{2})\cup(x_{3},x_{4})$ and $D_{x}(x)>0$, $x\in(-\infty,x_{1})\cup(x_{2},x_{3})\cup(x_{4},\infty)$. Similarly, $X(y)$, is defined by $R(X(y),y) = 0$, it has four real branch points $0\leq y_{1}<y_{2}\leq1<y_{3}<y_{4}$, and $D_{x}(y)<0$, $y\in(y_{1},y_{2})\cup(y_{3},y_{4})$ and $D_{x}(y)>0$, $y\in(-\infty,y_{1})\cup(y_{2},y_{3})\cup(y_{4},\infty)$. 
\end{lemma}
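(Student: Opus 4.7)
The plan is to analyze the discriminant $D_x(x) = b(x)^2 - 4 a(x) c(x)$, whose real zeros are exactly the real branch points of the two-valued algebraic function $Y(x)$ defined by $R(x, Y(x)) = 0$. Since $a$, $b$, $c$ are explicit quadratics in $x$, $D_x$ is a polynomial of degree at most four; a direct computation of its leading coefficient shows that it does not vanish in the interior of the stability region $\mathcal{R}$ of Theorem \ref{thm1} and is in fact positive, so $D_x$ has at most four zeros and tends to $+\infty$ as $|x| \to \infty$. The corresponding statement for $X(y)$ will then follow by the symmetry of the kernel under the swap $(x, \lambda_1, \alpha_1, \widehat{\alpha}_1) \leftrightarrow (y, \lambda_2, \alpha_2, \widehat{\alpha}_2)$, which leaves the structure of the discriminant invariant.

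Next, I would evaluate $D_x$ at $x = 0$ and $x = 1$. From the explicit form of $c(x)$ one has $c(0) = 0$, hence $D_x(0) = b(0)^2 = [\alpha_1 \widehat{\alpha}_2 (\lambda_1 + \lambda_2 + L - 1)]^2 > 0$ for generic parameters in the stability region. At $x = 1$, using $a(1) = -\lambda_2(1 - \alpha_2\widehat{\alpha}_1)$, $b(1) = \lambda_2(1 - 2\alpha_2\widehat{\alpha}_1) + \alpha_2\widehat{\alpha}_1$, and $c(1) = -\alpha_2\widehat{\alpha}_1(1 - \lambda_2)$, a short expansion yields the clean identity $D_x(1) = (\lambda_2 - \alpha_2\widehat{\alpha}_1)^2 \geq 0$, with equality precisely on the boundary of $\mathcal{R}$. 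Thus $x = 0$ is never a branch point, while $x = 1$ may coincide with $x_2$ only at the boundary.

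The location of the four zeros then follows by combining these sign evaluations with Lemma \ref{LEM}. Since $R(x, y) = 0$ has exactly one root $x = X_0(y)$ with $|X_0(y)| \leq 1$ for each $y$ on the unit circle, a standard Fayolle--Iasnogorodski--Malyshev continuity argument applied to the image of the unit circle under the two branches of $Y(x)$ implies that exactly two branch points of $Y(x)$ lie in $[0, 1]$ and the remaining two in $(1, \infty)$. Combined with $D_x(0) > 0$, $D_x(1) \geq 0$, $D_x(x) \to +\infty$ as $|x| \to \infty$, and simplicity of the zeros in the interior of $\mathcal{R}$, the intermediate value theorem forces the ordering $0 < x_1 < x_2 \leq 1 < x_3 < x_4$ together with the alternating sign pattern: $D_x > 0$ on $(-\infty, x_1) \cup (x_2, x_3) \cup (x_4, \infty)$ and $D_x < 0$ on $(x_1, x_2) \cup (x_3, x_4)$.

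The main obstacle is establishing that all four zeros of $D_x$ are real rather than forming one or two complex-conjugate pairs. One route is to apply Sturm's theorem directly to the quartic $D_x$, which is feasible but algebraically cumbersome; a cleaner alternative is to appeal to ergodicity of the underlying two-dimensional random walk, since stability of the queues forces $H(x, y)$ to be analytic on the unit bidisk and constrains $D_x$ to admit the real-root structure needed to define the conformal map in the subsequent boundary value problem. A secondary subtlety is the boundary case $x_2 = 1$: by the identity $D_x(1) = (\lambda_2 - \alpha_2\widehat{\alpha}_1)^2$, this occurs exactly when $\lambda_2 = \alpha_2\widehat{\alpha}_1$, i.e., when aggregator $R_2$ operates at its maximal throughput, and strict stability then yields the strict inequality $x_2 < 1$.
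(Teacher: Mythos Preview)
The paper does not actually prove this lemma: immediately before the statement it says ``its proof is based on algebraic arguments, thus it is omitted.'' So there is no proof in the paper to compare your attempt against.

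That said, your outline is broadly along the lines of the standard Fayolle--Iasnogorodski--Malyshev treatment that the paper invokes (reference \cite{fay}), and your point evaluations are correct: indeed $c(0)=0$ gives $D_x(0)=b(0)^2>0$, and the identity $D_x(1)=(\lambda_2-\alpha_2\widehat{\alpha}_1)^2$ is exactly right, matching the paper's later observation that $x_2=1$ precisely on the boundary $\lambda_2=\alpha_2\widehat{\alpha}_1$. Two places where your sketch would need tightening if you were to write a full proof: (i) the claim that the leading coefficient of $D_x$ is positive is not automatic in the strong-MPR regime $\alpha_1\widehat{\alpha}_2+\alpha_2\widehat{\alpha}_1>1$, since the cross term $-4a(x)c(x)$ then contributes with the ``wrong'' sign and one must check the full expression; (ii) the step where you deduce from Lemma~\ref{LEM} that exactly two branch points lie in $[0,1]$ is not a direct consequence of that lemma, which concerns roots of $R(x,y)=0$ on $|y|=1$ rather than branch points of $Y(x)$---the FIM argument for this goes through the factorization of $D_x$ and a separate sign analysis, not through the image of the unit circle. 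Your own caveat about reality of all four roots is the genuine crux, and the two routes you mention (Sturm sequences, or the probabilistic argument via ergodicity of the underlying random walk) are precisely the ones used in the literature the paper defers to.
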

Let $\tilde{C}_{x}=\mathbb{C}_{x}-([x_{1},x_{2}]\cup[x_{3},x_{4}]$, $\tilde{C}_{y}=\mathbb{C}_{y}-([y_{1},y_{2}]\cup[y_{3},y_{4}]$, where $\mathbb{C}_{x}$, $\mathbb{C}_{y}$ the complex planes of $x$, $y$, respectively. In $\tilde{C}_{x}$ (resp. $\tilde{C}_{y}$), denote by $Y_{0}(x)$ (resp. $X_{0}(y)$) the root of $R(x,Y(x))=0$ (resp. $R(X(y),y)=0$) with the smallest modulus, and $Y_{1}(x)$ (resp. $X_{1}(y)$) the other one. 
Define the image contours, $\mathcal{L}=Y_{0}[\overrightarrow{\underleftarrow{x_{1},x_{2}}}]$, $\mathcal{M}=X_{0}[\overrightarrow{\underleftarrow{y_{1},y_{2}}}]$, where $[\overrightarrow{\underleftarrow{u,v}}]$ stands for the contour traversed from $u$ to $v$ along the upper edge of the slit $[u,v]$ and then back to $u$ along the lower edge of the slit. In the following lemma we provide exact characterization for the smooth and closed contours $\mathcal{L}$, $\mathcal{M}$ respectively.
\begin{lemma}\label{SQ} The algebraic function $X(y)$, $y\in[y_{1},y_{2}]$ lies on a closed contour $\mathcal{M}$, which is symmetric with respect to the real line and defined by
\begin{displaymath}
\begin{array}{l}
|x|^{2}=m(Re(x)),\,m(\delta)=\frac{\widehat{c}(\zeta(\delta))}{\widehat{a}(\zeta(\delta))},\,|x|^{2}\leq\frac{\widehat{c}(y_{2})}{\widehat{a}(y_{2})},
\end{array}
\end{displaymath}
where, $\zeta(\delta)=\frac{k_{2}(\delta)-\sqrt{k_{2}^{2}(\delta)-4k_{3}(\delta)k_{1}(\delta)}}{2k_{1}(\delta)}$, and
\begin{displaymath}
\begin{array}{rl}
k_{1}(\delta):=&a_{1}\widehat{a}_{2}(\lambda_{2}+2L(1-\delta))-(\lambda_{2}+L(1-2\delta))(1-\widehat{a}_{1}a_{2}),\\
k_{2}(\delta):=&2\delta[a_{1}\widehat{a}_{2}(\lambda_{1}+L)+\widehat{a}_{1}a_{2}(\lambda_{1}+2L)-\lambda_{1}-L]+\lambda_{1}+\lambda_{2}+L+a_{1}\widehat{a}_{2}(1-2\lambda_{1})\\
&+\widehat{a}_{1}a_{2}(1-\lambda_{1}-2(\lambda_{2}+L)),\\
k_{3}(\delta):=&-[\lambda_{2}(\widehat{a}_{1}a_{2}+a_{1}\widehat{a}_{2})+\widehat{a}_{1}a_{2}(1+(\lambda_{1}+L)(1+2\delta))].
\end{array}
\end{displaymath}
Moreover, $\beta_{0}:=\sqrt{\frac{\widehat{c}(y_{2})}{\widehat{a}(y_{2})}}$, $\beta_{1}:=-\sqrt{\frac{\widehat{c}(y_{1})}{\widehat{a}(y_{1})}}$ are the extreme right and left points of $\mathcal{M}$, respectively. Similarly, $Y(x)$, $x\in[x_{1},x_{2}]$ lies on a closed contour $\mathcal{L}$. Its exact representation is derived as for $\mathcal{M}$, and further details are omitted.
\end{lemma}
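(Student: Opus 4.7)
\textbf{Proof plan for Lemma \ref{SQ}.}
The plan is to exploit the fact that on the real interval $[y_{1},y_{2}]$ the discriminant $D_{y}(y)$ is nonpositive by Lemma \ref{lem1}, so that the two roots $X_{0}(y),X_{1}(y)$ of the quadratic (in $x$) $\widehat{a}(y)x^{2}+\widehat{b}(y)x+\widehat{c}(y)=0$ are complex conjugates of one another. Both $|X_{0}(y)|^{2}$ and $\mathrm{Re}(X_{0}(y))$ can then be read off directly from Vieta's formulas, which is what delivers the Cartesian description of $\mathcal{M}$.

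First I would write $X_{0}(y)\overline{X_{0}(y)}=X_{0}(y)X_{1}(y)=\widehat{c}(y)/\widehat{a}(y)$, hence $|X_{0}(y)|^{2}=\widehat{c}(y)/\widehat{a}(y)$, and $2\,\mathrm{Re}(X_{0}(y))=X_{0}(y)+X_{1}(y)=-\widehat{b}(y)/\widehat{a}(y)$. Because $\mathcal{M}$ is traced as $y$ moves along the upper edge of the slit $[y_{1},y_{2}]$ and returns along the lower edge, the two edges produce complex-conjugate values of $X_{0}$, so $\mathcal{M}$ is automatically symmetric about the real axis and closes up at the endpoints where $D_{y}(y_{j})=0$ forces $X_{0}(y_{j})=X_{1}(y_{j})\in\mathbb{R}$.

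Second, to go from the parametric description to the Cartesian equation $|x|^{2}=m(\mathrm{Re}(x))$, I would set $\delta=\mathrm{Re}(x)$ and eliminate $y$ from $2\delta\,\widehat{a}(y)+\widehat{b}(y)=0$. Since both $\widehat{a}(y)$ and $\widehat{b}(y)$ are quadratic in $y$, this relation is a quadratic in $y$; grouping the coefficients of $y^{2},\,y,\,1$ using the explicit formulas for $\widehat{a}(y),\widehat{b}(y)$ given before the lemma should reproduce exactly the polynomials $k_{1}(\delta),k_{2}(\delta),k_{3}(\delta)$ claimed in the statement, i.e.\ the relation $k_{1}(\delta)y^{2}-k_{2}(\delta)y+k_{3}(\delta)=0$. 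Inverting this and choosing the branch that keeps $y\in[y_{1},y_{2}]$ yields $y=\zeta(\delta)$ as stated (the sign of the square root in $\zeta$ is forced by this branch selection), and then $m(\delta)=\widehat{c}(\zeta(\delta))/\widehat{a}(\zeta(\delta))$.

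Third, for the extreme real points I would use that at $y=y_{j}$ the two roots coincide and equal $-\widehat{b}(y_{j})/(2\widehat{a}(y_{j}))$, while simultaneously $|X_{0}(y_{j})|^{2}=\widehat{c}(y_{j})/\widehat{a}(y_{j})$. Hence the two endpoints of $\mathcal{M}$ lie on the real axis at $\pm\sqrt{\widehat{c}(y_{j})/\widehat{a}(y_{j})}$, and the sign in each case is fixed by that of $-\widehat{b}(y_{j})$; checking these signs gives $\beta_{0}=\sqrt{\widehat{c}(y_{2})/\widehat{a}(y_{2})}$ and $\beta_{1}=-\sqrt{\widehat{c}(y_{1})/\widehat{a}(y_{1})}$. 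The inequality $|x|^{2}\leq \widehat{c}(y_{2})/\widehat{a}(y_{2})$ then follows from monotonicity of $\widehat{c}(y)/\widehat{a}(y)$ on $[y_{1},y_{2}]$, which can be verified from its derivative (or a continuity-plus-endpoint argument using that $\widehat{a}(y)$ does not vanish on the interval). The representation for $\mathcal{L}$ is entirely analogous, swapping the roles of $x$ and $y$ and using $a(x),b(x),c(x)$ instead.

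The main obstacle is purely algebraic bookkeeping: writing out $2\delta\,\widehat{a}(y)+\widehat{b}(y)$ in powers of $y$ and matching with the stated $k_{i}(\delta)$, and then arguing that the ``minus'' branch of the quadratic formula is the correct one to keep $\zeta(\delta)\in[y_{1},y_{2}]$ and to make $\mathcal{M}$ a single simple closed curve. The geometric content, namely that $\mathcal{M}$ is smooth, closed, and symmetric about the real axis, is essentially immediate from the complex-conjugate root structure delivered by Lemma \ref{lem1}.
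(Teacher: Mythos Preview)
Your proposal is correct and follows essentially the same route as the paper's own proof: both start from the complex-conjugate root structure on $[y_{1},y_{2}]$ guaranteed by Lemma \ref{lem1}, read off $|X(y)|^{2}=\widehat{c}(y)/\widehat{a}(y)$ and $\mathrm{Re}(X(y))=-\widehat{b}(y)/(2\widehat{a}(y))$ via Vieta, invert the latter relation (a quadratic in $y$) to obtain $\zeta(\delta)$ with the branch chosen so that $y\in[y_{1},y_{2}]\subset[0,1]$, and use monotonicity of $\widehat{c}(y)/\widehat{a}(y)$ for the bound and the extreme points. Your write-up is in fact more explicit than the paper's terse argument, but the underlying ideas are identical.
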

\begin{proof} See Appendix \ref{a1}.
\end{proof}
\subsection{The boundary value problems}\label{bound}
Here, we proceed with the derivation of the probability generating function of the joint stationary queue length distribution at relays. The analysis is distinguished in two cases according to the values of the parameters. 
\subsubsection{A Dirichlet boundary value problem}\label{dir}
Let $\frac{a_{1}\tilde{a}_{2}}{a_{1}p_{R_{1},\{R_{1}\}}^{D}}+\frac{a_{2}\tilde{a}_{1}}{a_{2}p_{R_{2},\{R_{2}\}}^{D}}=1$. It can be easily seen that
$A(x,y)=\frac{d_{1}}{\alpha_{1}\widehat{\alpha}_{2}}B(x,y)\Leftrightarrow A(x,y)=\frac{\alpha_{2}\widehat{\alpha}_{1}}{d_{2}}B(x,y)$. 
Therefore, for $y\in \mathcal{D}_{y}=\{y\in\mathcal{C}:|y|\leq1,|X_{0}(y)|\leq1\}$,
\begin{equation}
\begin{array}{l}
\alpha_{2}\widehat{\alpha}_{1}H(X_{0}(y),0)+d_{2}H(0,y)+\frac{\alpha_{2}\widehat{\alpha}_{1}(1-\rho)C(X_{0}(y),y)}{A(X_{0}(y),y)}=0.
\end{array}
\label{cons}
\end{equation}
Both $H(X_{0}(y),0)$, and $H(0,y)$, where $y\in \mathcal{D}_{y}-[y_{1},y_{2}]$, are analytic functions. Using analytic continuation arguments we consider (\ref{cons}) for $x\in\mathcal{M}$
\begin{equation}
\begin{array}{c}
\alpha_{2}\widehat{\alpha}_{1}H(x,0)+d_{2}H(0,Y_{0}(x))+\frac{\alpha_{2}\widehat{\alpha}_{1}(1-\rho)C(x,Y_{0}(x))}{A(x,Y_{0}(x))}=0.
\end{array}
\label{con2}
\end{equation}
By noticing that $H(0,Y_{0}(x))$ is real for $x\in\mathcal{M}$, i.e., $Y_{0}(x)\in[y_{1},y_{2}]$, we have
\begin{equation}
\begin{array}{c}
Re(iH(x,0))=w(x),\,x\in\mathcal{M},
\end{array}
\label{p1}
\end{equation}
where $w(x):=Re\left(-i\frac{C(x,Y_{0}(x))}{A(x,Y_{0}(x))}\right)(1-\rho)$. Clearly, some technical requirements are needed to be everything well defined. In particular, we have to investigate the possible poles of $H(x,0)$, $x\in S:=G_{\mathcal{M}}\cap\bar{D}_{x}^{c}$, where $G_{\mathcal{U}}$ be the interior domain bounded by $\mathcal{U}$, and $D_{x}=\{x:|x|<1\}$, $\bar{D}_{x}=\{x:|x|\leq1\}$, $\bar{D}_{x}^{c}=\{x:|x|>1\}$. This is equivalent with the investigation of the zeros of $A(x,Y_{0}(x))$, $x\in S_{x}$.
Moreover, in order to solve (\ref{p1}) we first transform the problem from $\mathcal{M}$ to the unit circle; see Appendix \ref{conf} for details. Let the conformal mapping, $z=\gamma(x):G_{\mathcal{M}}\to G_{\mathcal{C}}$, and its inverse $x=\gamma_{0}(z):G_{\mathcal{C}}\to G_{\mathcal{M}}$. 

By applying the transformation, the problem is reduced to the determination of function $\tilde{T}(z)=H(\gamma_{0}(z),0)$ regular for $z\in G_\mathcal{C}$, continuous for $z\in\mathcal{C}\cup G_\mathcal{C}$ such that, $Re(i\tilde{T}(z))=w(\gamma_{0}(z))$, $z\in\mathcal{C}$. The solution of the Dirichlet problem with boundary condition (\ref{p1}) is:
\begin{equation}
\begin{array}{c}
H(x,0)=-\frac{1-\rho}{2\pi}\int_{|t|=1}f(t)\frac{t+\gamma(x)}{t-\gamma(x)}\frac{dt}{t}+C,\,x\in\mathcal{M},
\end{array}
\label{sol1}
\end{equation}
where $f(t)=Re\left(-i\frac{C(\gamma_{0}(t),Y_{0}(\gamma_{0}(t)))}{A(\gamma_{0}(t),Y_{0}(\gamma_{0}(t)))}\right)$, $C$ a constant to be defined by setting $x=0\in G_{\mathcal{M}}$ in (\ref{sol1}) and using the fact that $H(0,0)=1-\rho$, $\gamma(0)=0$. In case $H(x,0)$ has a pole, say $\bar{x}$, we still have a Dirichlet problem for the function $(x-\bar{x})H(x,0)$. 
Setting $t=e^{i\phi}$, $\gamma_{0}(e^{i\phi})=\rho(\psi(\phi))e^{i\psi(\phi)}$, we obtain after some algebra, $f(e^{i\phi})=\frac{d_{1}\alpha_{2}\sin(\psi(\phi))(1-Y_{0}(\gamma_{0}(e^{i\phi}))^{-1})}{\rho(\psi(\phi))k(\phi)}$,
which is an odd function of $\phi$, and 
\begin{displaymath}
\begin{array}{c}
k(\phi)=\left[\alpha_{2}\widehat{\alpha}_{1}(1-Y_{0}^{-1}(\gamma_{0}(e^{i\phi})))+d_{1}(1-\frac{\cos(\psi(\phi))}{\rho(\psi(\phi))})\right]^{2}+\left(d_{1}\frac{\sin(\psi(\phi))}{\rho(\psi(\phi))}\right)^{2}.
\end{array}
\end{displaymath}
Thus, $C=1-\rho$. Substituting in (\ref{sol1}) we obtain after simple calculations an integral representation of $H(x,0)$ on a real interval for $x\in G_{\mathcal{M}}$, i.e.,
\begin{equation}
\begin{array}{c}
H(x,0)=(1-\rho)\left\lbrace 1+\frac{2\gamma(x)i}{\pi}\int_{0}^{\pi}\frac{f(e^{i\phi})\sin(\phi)d\phi}{1-2\gamma(x)\cos(\phi)-\gamma(x)^{2}}\right\rbrace.
\end{array}
\label{soll}
\end{equation}

Similarly, we can determine $H(0,y)$ by solving another Dirichlet boundary value problem on the closed contour $\mathcal{L}$. Then, using the fundamental functional equation (\ref{we}), we uniquely obtain $H(x,y)$.
\subsubsection{A homogeneous Riemann-Hilbert boundary value problem}\label{rh}
In case $\frac{a_{1}\tilde{a}_{2}}{a_{1}^{*}p_{R_{1},\{R_{1}\}}^{D*}}+\frac{a_{2}\tilde{a}_{1}}{a_{2}^{*}p_{R_{2},\{R_{2}\}}^{D*}}\neq1$,
consider the following transformation:
\begin{displaymath}
G(x):=H(x,0)+\frac{\alpha_{1}p_{R_{1},\{R_{1}\}}^{D} d_{2}H(0,0)}{d_{1}d_{2}-\alpha_{1}\widehat{\alpha}_{2}\alpha_{2}\widehat{\alpha}_{1}},
L(y):=H(0,y)+\frac{\alpha_{2}p_{R_{2},\{R_{2}\}}^{D} d_{1}H(0,0)}{d_{1}d_{2}-\alpha_{1}\widehat{\alpha}_{2}\alpha_{2}\widehat{\alpha}_{1}}.
\end{displaymath}
Then, for $y\in \mathcal{D}_{y}$,
\begin{equation}
\begin{array}{c}
A(X_{0}(y),y)G(X_{0}(y))=-B(X_{0}(y),y)L(y).
\end{array}
\label{po31}
\end{equation}
Using similar arguments as in previous subsection, we have for $x\in\mathcal{M}$
\begin{equation}
\begin{array}{c}
A(x,Y_{0}(x))G(x)=-B(x,Y_{0}(x))L(Y_{0}(x)).
\end{array}
\label{za1}
\end{equation}
Clearly, $G(x)$ is holomorphic in $D_{x}$, continuous in $\bar{D}_{x}$, but it might have poles in $S_{x}$, based on the values of the system parameters. These poles, if exist, coincide with the zeros of $A(x,Y_{0}(x))$ in $S_{x}$. For $y\in[y_{1},y_{2}]$, let $X_{0}(y)=x\in\mathcal{M}$ and realize that $Y_{0}(X_{0}(y))=y$ so that $y=Y_{0}(x)$. Taking into account the possible poles of $G(x)$, and noticing that $L(Y_{0}(x))$ is real for $x\in\mathcal{M}$, since $Y_{0}(x)\in[y_{1},y_{2}]$, we have
\begin{equation}
\begin{array}{c}
Re[iU(x)\tilde{G}(x)]=0,\,x\in\mathcal{M},\\
U(x)=\frac{A(x,Y_{0}(x))}{(x-\bar{x})^{r}B(x,Y_{0}(x))},\,\tilde{G}(x)=(x-\bar{x})^{r}G(x),
\end{array}
\label{df3}
\end{equation}
where $r=0,1$, whether $\bar{x}$ is zero or not of $A(x,Y_{0}(x))$ in $S_{x}$. Thus, $\tilde{G}(x)$ is regular for $x\in G_{\mathcal{M}}$, continuous for $x\in\mathcal{M}\cup G_{\mathcal{M}}$, and $U(x)$ is a non-vanishing function on $\mathcal{M}$. By conformally transform the problem (\ref{df3}) from $\mathcal{M}$ to the unit circle, using $z=\gamma(x):G_{\mathcal{M}}\to G_{\mathcal{C}}$, and its inverse given by $x=\gamma_{0}(z):G_{\mathcal{C}}\to G_{\mathcal{M}}$, the problem in (\ref{df3}) is reduced to the following: find a function $F(z):=\tilde{G}(\gamma_{0}(z))$, regular in $G_{\mathcal{C}}$, continuous in $G_{\mathcal{C}}\cup\mathcal{C}$ such that, $Re[iU(\gamma_{0}(z))F(z)]=0,\,z\in\mathcal{C}$. 

We now need additional information in order to derive a solution for the above problem, i.e., we must determine the index $\chi=\frac{-1}{\pi}[arg\{U(x)\}]_{x\in \mathcal{M}}$, where $[arg\{U(x)\}]_{x\in \mathcal{M}}$, denotes the variation of the argument of the function $U(x)$ as $x$ moves along the closed contour $\mathcal{M}$ in the positive direction, provided that $U(x)\neq0$, $x\in\mathcal{M}$. Following the methodology in \cite{fay}, the value of the index is closely related to the stability conditions. The following lemma provides the necessary information.
\begin{lemma}\begin{enumerate}
\item If $\lambda_{2}<\alpha_{2}\widehat{\alpha}_{1}$, then $\chi=0$ is equivalent to
\begin{displaymath}
\begin{array}{l}
\frac{d A(x,Y_{0}(x))}{dx}|_{x=1}<0\Leftrightarrow\lambda_{1}<\alpha_{1}p_{R_{1},\{R_{1}\}}^{D}+\frac{d_{1}\lambda_{2}}{\alpha_{2}\widehat{\alpha}_{1}},\\ \frac{d B(X_{0}(y),y)}{dy}|_{y=1}<0\Leftrightarrow\lambda_{2}<\alpha_{2}p_{R_{2},\{R_{2}\}}^{D}+\frac{d_{2}\lambda_{1}}{\alpha_{1}\widehat{\alpha}_{2}}.
\end{array}
\end{displaymath}
\item If $\lambda_{2}\geq \alpha_{2}\widehat{\alpha}_{1}$, $\chi=0$ is equivalent to 
\begin{displaymath}
\begin{array}{c}
\frac{d B(X_{0}(y),y)}{dy}|_{y=1}<0\Leftrightarrow \lambda_{2}<\alpha_{2}p_{R_{2},\{R_{2}\}}^{D}+\frac{d_{2}\lambda_{1}}{\alpha_{1}\widehat{\alpha}_{2}}.
\end{array}
\end{displaymath}
\end{enumerate}
\end{lemma}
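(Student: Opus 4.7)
The plan is to compute the index $\chi = -\frac{1}{\pi}[\arg U(x)]_{x \in \mathcal{M}}$ by decomposing the variation of $\arg U$ along $\mathcal{M}$ into contributions from $A(x, Y_0(x))$, $B(x, Y_0(x))$, and the factor $(x-\bar{x})^r$. Writing
\[
[\arg U(x)]_{\mathcal{M}} = [\arg A(x,Y_0(x))]_{\mathcal{M}} - r[\arg(x-\bar{x})]_{\mathcal{M}} - [\arg B(x,Y_0(x))]_{\mathcal{M}},
\]
each term is a multiple of $\pi$ because $\mathcal{M}$ is closed. By the argument principle, these variations count zeros of $A(x,Y_0(x))$ and $B(x,Y_0(x))$ inside $G_{\mathcal{M}}$ (minus the contribution of the pole extracted through the factor $(x-\bar{x})^r$), so $\chi=0$ will be equivalent to a zero-balance statement between $A$ and $B$ inside $G_{\mathcal{M}}$.

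First, I would use Lemma \ref{SQ} to note that $Y_0(x) \in [y_1,y_2] \subset \mathbb{R}$ for $x \in \mathcal{M}$ and that $\mathcal{M}$ is a smooth closed curve symmetric with respect to the real axis; the candidate critical boundary point is $x=1$. Applying Lemma \ref{LEM} symmetrically gives $Y_0(1)=1$ precisely when $\lambda_2 < \alpha_2\widehat{\alpha}_1$, and this dichotomy is exactly the case split in the statement. In the first regime, $x=1 \in \mathcal{M}$ and $A(1,Y_0(1))=A(1,1)=0$, $B(1,Y_0(1))=B(1,1)=0$ by construction of the kernel equation; in the second regime $Y_0(1)\neq 1$ and only $B(X_0(y),y)$ degenerates at $y=1$ on the companion contour $\mathcal{L}$.

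Next, I would compute the local argument variation at each degenerate boundary point by a Taylor expansion. For $\lambda_2 < \alpha_2\widehat{\alpha}_1$, the passage of $x$ through $1$ along $\mathcal{M}$ contributes a half-turn whose sign is fixed by the sign of $\tfrac{d}{dx}A(x,Y_0(x))\big|_{x=1}$ (and analogously for $B(X_0(y),y)$ at $y=1$ on $\mathcal{L}$). A negative derivative ensures a simple zero traversed in the direction that leaves $U(x)$ with no net winding, so $\chi=0$; a positive derivative would add a full turn and force $\chi\neq 0$. The elementary rearrangement of the derivative inequalities — using the explicit expressions of $A$, $B$, $d_1$, $d_2$, $\widehat{\alpha}_k$ from Section \ref{anal} and the conservation relations in \eqref{con} — yields the two equivalent throughput inequalities listed in case (1). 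In the second regime, only the $\mathcal{L}$ boundary is active at a zero of $B(X_0(y),y)$, so only the corresponding inequality appears, giving case (2).

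The main obstacle is the careful attribution of the half-turn at the degenerate endpoint $x=1$ (or $y=1$), where both $x-1$ and $Y_0(x)-1$ vanish simultaneously. One must verify, via the implicit differentiation of $R(x,Y_0(x))=0$ combined with the branch-point data of Lemma \ref{lem1}, that the zero of $A(x,Y_0(x))$ at $x=1$ is simple and that the tangent direction of $\mathcal{M}$ at $x=1$ produces exactly a $\pm\pi$ argument swing whose sign coincides with $\mathrm{sgn}\,\tfrac{d}{dx}A(x,Y_0(x))\big|_{x=1}$. Once this non-degeneracy is in place, the translation of the derivative conditions into the stated throughput-level inequalities is a direct algebraic manipulation, and the two cases of the lemma follow immediately.
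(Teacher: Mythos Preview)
The paper does not actually prove this lemma: immediately before the statement it only says ``Following the methodology in \cite{fay}, the value of the index is closely related to the stability conditions,'' and immediately after it moves on to the solution formula. So there is no in-paper argument to compare your proposal against; the paper simply defers to the standard Fayolle--Iasnogorodski machinery.

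Your sketch is precisely that machinery: decompose $[\arg U]_{\mathcal{M}}$ into the contributions of $A$, $B$, and the extracted pole factor, locate the only possible boundary zero at the critical point $x=1$ (equivalently $y=1$), and read off the sign of the resulting half-turn from the sign of the tangential derivative there. This is consistent with what the paper cites. One imprecision worth fixing: your reference to ``the $\mathcal{L}$ boundary'' for the $B$-term is misleading. The entire index computation lives on $\mathcal{M}$; the reason $B(X_0(y),y)$ and its $y$-derivative at $y=1$ appear is that $\mathcal{M}=X_0[\overrightarrow{\underleftarrow{y_1,y_2}}]$ is parametrized by $y\in[y_1,y_2]$ via $x=X_0(y)$, with $Y_0(X_0(y))=y$, so that $B(x,Y_0(x))\big|_{x\in\mathcal{M}}=B(X_0(y),y)$. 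No second contour enters the index calculation. With that correction in place, and once you carry out the algebra using the explicit forms of $A$, $B$, $d_1$, $d_2$ together with the implicit derivative $Y_0'(1)$ obtained from $R(x,Y_0(x))=0$, your outline reproduces the stated equivalences and matches the methodology the paper invokes.
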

Thus, under stability conditions the problem defined in (\ref{df3}) has a unique solution for $x\in G_{\mathcal{M}}$ given by,
\begin{equation}
H(x,0)=K(x-\bar{x})^{-r}\exp\left\lbrace\frac{1}{2i\pi}\int_{|t|=1}\frac{\log J(t)}{t-\gamma(x)}dt \right\rbrace-\frac{\alpha_{1}p_{R_{1},\{R_{1}\}}^{D} d_{2}H(0,0)}{d_{1}d_{2}-\alpha_{1}\widehat{\alpha}_{2}\alpha_{2}\widehat{\alpha}_{1}},
\label{sool1}
\end{equation}
where $K$ is a constant and $J(t)=\frac{\overline{U_{1}(t)}}{U_{1}(t)}$, $U_{1}(t)=U(\gamma_{0}(t))$, $|t|=1$. Setting $x=0$ in (\ref{sool1}) we derive a relation between $K$ and $H(0,0)$. Then, for $x=1\in G_{\mathcal{M}}$, and using the first in (\ref{con}) we can obtain $K$ and $H(0,0)$. Substituting back in (\ref{sool1}) we obtain for $x\in G_{\mathcal{M}}$,
\begin{equation}
\begin{array}{l}
H(x,0)=\frac{\lambda_{1}d_{2}+\alpha_{1}\widehat{\alpha}_{2}(\bar{t}_{1}\bar{t}_{2}\alpha_{2}p_{R_{2},\{R_{2}\}}^{D}-\lambda_{2})}{(\bar{x}-1)^{r}\bar{t}_{1}\bar{t}_{2}(\alpha_{1}\widehat{\alpha}_{2}\alpha_{2}\widehat{\alpha}_{1}-d_{1}d_{2})} \times\\\left((\bar{x}-x)^{r}\exp\left[\frac{\gamma(x)-\gamma(1)}{2i\pi}\int_{|t|=1}\frac{\log J(t)}{(t-\gamma(x))(t-\gamma(1))}dt\right]+\frac{\alpha_{1}p_{R_{1},\{R_{1}\}}^{D} d_{2}\bar{x}^{r}}{\alpha_{1}\widehat{\alpha}_{2}\alpha_{2}p_{R_{2},\{R_{2}\}}^{D}}\exp\left[\frac{-\gamma(1)}{2i\pi}\int_{|t|=1}\frac{\log J(t)}{t(t-\gamma(1))}dt \right]\right).
\end{array}
\label{fin}
\end{equation}
The other unknown function $H(0,y)$ is determined similarly, by solving another Riemann-Hilbert boundary value problem on the closed contour $\mathcal{L}$. Then, using the fundamental functional equation (\ref{we}) we uniquely obtain $H(x,y)$. After that, we can derive several performance metrics. In particular, using Little's law, the expected delay at each aggregator is
\begin{displaymath}
\begin{array}{c}
D_{1}=\frac{\lambda_{1}+d_{1}H_{1}(1,0)}{\lambda_{1}a_{1}\widehat{a}_{2}},\,D_{2}=\frac{\lambda_{2}+d_{2}H_{2}(0,1)}{\lambda_{2}a_{2}\widehat{a}_{1}},
\end{array}
\end{displaymath}
where $H_{1}(1,0):=\frac{d}{dx}H(x,0)|_{x=1}$, and $H_{2}(0,1):=\frac{d}{dy}H(0,y)|_{y=1}$.

\section{Explicit bounds for the mean delay in the symmetrical system} \label{sec:delaysym}
In the following we consider the symmetrical model and obtain explicit bounds for the average delay. In particular, for $k=1,2$ let $\alpha_{k}=\alpha$, $t_{k}=t$, $p_{k,\{k\}}^{R_{k}}=s_{1}^{R}$, $p_{k,\{k\}}^{D}=s_{1}^{D}$, $p_{k,\{1,2\}}^{R_{k}}=s_{2}^{R}$, $p_{k,\{1,2\}}^{D}=s_{2}^{D}$, $p_{1,2,\{1,2\}}^{D}=s_{0}^{D}$, $p_{R_k,\{R_k\}}^{D}=r_{1}^{D}$, $p_{R_k,\{R_1,R_{2}\}}^{D}=r_{2}^{D}$, $p_{R_1,R_2,\{R_1,R_2\}}^{D}=r_{0}^{D}$, and as a result $d_{k}=d:=\alpha^{2}(r_{2}^{D}-r_{1}^{D})$. Note that due to the symmetry, $H(0,1)=H(1,0)$. Using the fact that $H(1,1)=1$, and setting in (\ref{we}) $y=1$, dividing with $x-1$, and taking the limit $x\to 1$, by using the L'Hospital rule,
\begin{equation}
\begin{array}{rl}
\alpha[\widehat{\alpha}+\alpha r_{0}^{D}]-\lambda=&(d+\alpha\widehat{\alpha}+2\alpha^{2}r_{0}^{D})H(1,0)-(d+\alpha^{2}r_{0}^{D})H(0,0),
\end{array}
\label{xb}
\end{equation}
where now due to the symmetry $\lambda_{k}=\lambda$, $k=1,2,$ $\widehat{\alpha}=\bar{\alpha}r_{1}^{D}+\alpha r_{2}^{D}$, and $\lambda=t^{2}s_{2}^{R}(\bar{s}_{0}^{D}+s_{2}^{D})+t\bar{t}\bar{s}_{1}^{D}s_{1}^{R}$.
Note that $\alpha[\widehat{\alpha}+\alpha r_{0}^{D}]>\lambda$ due to the stability condition.

Denote by $H_{1}(x,y)$, $H_{2}(x,y)$ the derivatives of $H(x,y)$ with respect to $x$, $y$. Due to the symmetry, $H_{1}(1,1)=H_{2}(1,1)$, $H_{1}(1,0)=H_{2}(0,1)$. Denote by $M_{k}=H_{1}(1,1)$, $k=1,2$. Differentiating (\ref{we}) with respect to $x$, setting $(x,y)=(1,1)$, and using (\ref{xb}) we obtain 
\begin{equation}
\begin{array}{l}
M_{1}=\frac{\lambda\bar{\lambda}+(d+\alpha^{2}r_{0}^{D})H_{1}(1,0)}{\alpha[\widehat{\alpha}+\alpha r_{0}^{D}]-\lambda},
\end{array}
\label{d1}
\end{equation}
where $\bar{\lambda}=1-\lambda$. Now set $x=y$ in (\ref{we}) we obtain
\begin{equation}
\begin{array}{l}
2(\alpha[\widehat{\alpha}+\alpha r_{0}^{D}]-\lambda)\frac{d}{dx}H(x,x)|_{x=1}=2(1+2\lambda)(\alpha[\widehat{\alpha}+\alpha r_{0}^{D}]-\lambda)+\alpha^{2}r_{0}^{D}P(N_{1}>0,N_{2}>0)\\
+2H_{1}(1,0)(d+\alpha\widehat{\alpha}+2\alpha^{2}r_{0}^{D})-2\alpha(\widehat{\alpha}+\alpha r_{0}^{D})
+t^{2}\bar{s}_{0}^{D}(s_{2}^{R})^{2}+4\lambda(1-\alpha\widehat{\alpha}-\alpha^{2}r_{0}^{D}).
\end{array}
\label{d2}
\end{equation}
Using (\ref{d1}), (\ref{d2}), and realizing that due to the symmetry $\frac{d}{dx}H(x,x)|_{x=1}=2M_{1}$, we obtain
\begin{equation}
\begin{array}{l}
M_{1}=M_{2}=\frac{\lambda\bar{\lambda}(d+\alpha\widehat{a}+2\alpha^{2}r_{0}^{D})}{2\alpha(\alpha[\widehat{\alpha}+\alpha r_{0}^{D}]-\lambda)}
-\frac{(d+\alpha^{2}r_{0}^{D})[2\lambda(1+2\bar{\lambda})+t^{2}\bar{s}_{0}^{D}(s_{2}^{R})^{2}+\alpha^{2}r_{0}^{D}P(N_{1}>0,N_{2}>0)]}{2 \alpha(\alpha[\widehat{\alpha}+\alpha r_{0}^{D}]-\lambda)}.
\end{array}
\end{equation}
Using Little's law, the average delay at each aggregator is
\begin{equation}
\begin{array}{c}
D_{1}=D_{2}:=S-\phi,
\end{array}
\label{del}
\end{equation}
where $S=\frac{\lambda\bar{\lambda}(d+\alpha\widehat{a}+2\alpha^{2}r_{0}^{D})-(d+\alpha^{2}r_{0}^{D})[2\lambda(1+2\bar{\lambda})+t^{2}\bar{s}_{0}^{D}(s_{2}^{R})^{2}]}{2\lambda \alpha(\alpha[\widehat{\alpha}+\alpha r_{0}^{D}]-\lambda)}$ and $\phi=\frac{(d+\alpha^{2}r_{0}^{D})\alpha^{2}r_{0}^{D}P(N_{1}>0,N_{2}>0)}{2\lambda \alpha(\alpha[\widehat{\alpha}+\alpha r_{0}^{D}]-\lambda)}$.

Note that in case the destination can successfully receive at most one packet from the aggregators even if both of them transmit, i.e., $r_{0}^{D}=0$, the exact average queueing delay in a node is given by (\ref{del}) for $\phi=0$. Moreover, if $r_{0}^{D}=r_{2}^{D}=0$, then (\ref{del}) provides also the exact average delay for the case of a \textit{collision} channel. For the case of the general MPR model where $r_{0}^{D}\neq 0$, we can easily obtain bounds for the average delay at aggregators based on the sign of $\phi$. Since $Pr(N_1 > 0, N_2 > 0) > 0$, the sign of $\phi$ coincides with the
sign of $d+\alpha^{2}r_{0}^{D}$. To proceed, we distinguish the analysis in the following two cases:
\begin{enumerate}
\item If $d+\alpha^{2}r_{0}^{D}<0$, then $0\leq\phi\leq -\frac{\alpha^{2}r_{0}^{D}(d+\alpha^{2}r_{0}^{D})}{2\lambda \alpha(\alpha[\widehat{\alpha}+\alpha r_{0}^{D}]-\lambda)}$.
Thus, the upper and lower delay bound, say $D_{1}^{low}$, $D_{1}^{up}$ respectively are,
\begin{displaymath}
\begin{array}{lr}
D_{1}^{low}=S,&D_{1}^{up}=D_{1}^{low}-\frac{\alpha^{2}r_{0}^{D}(d+\alpha^{2}r_{0}^{D})}{2\lambda \alpha(\alpha[\widehat{\alpha}+\alpha r_{0}^{D}]-\lambda)}.
\end{array}
\end{displaymath}
\item If $d+\alpha^{2}r_{0}^{D}>0$, then $-\frac{\alpha^{2}r_{0}^{D}(d+\alpha^{2}r_{0}^{D})}{2\lambda \alpha(\alpha[\widehat{\alpha}+\alpha r_{0}^{D}]-\lambda)}\leq\phi\leq 0$. 
In such a case,
\begin{displaymath}
\begin{array}{rl}
D_{1}^{up}=S,&
D_{1}^{low}=D_{1}^{up}-\frac{\alpha^{2}r_{0}^{D}(d+\alpha^{2}r_{0}^{D})}{2\lambda \alpha(\alpha[\widehat{\alpha}+\alpha r_{0}^{D}]-\lambda)}.
\end{array}
\end{displaymath}
\end{enumerate}

The following remark shows that the gap between the upper and the lower bound of the average delay has a closed form expression.

\begin{remark}
We observe that the gap between the lower and the upper average delay is given by $D_{1}^{up} - D_{1}^{low} =\left|\frac{\alpha^{2}r_{0}^{D}(d+\alpha^{2}r_{0}^{D})}{2\lambda \alpha(\alpha[\widehat{\alpha}+\alpha r_{0}^{D}]-\lambda)} \right|$. Note that when the arrival rate $\lambda$ increases, then the gap tends to zero. Furthermore, when $\lambda\to 0$, then the average delay is very close to zero thus, the lower bound on the delay is tight.
\end{remark}
\begin{remark}
In the case of a collision channel model assumption, which is equivalent to $r_{0}^{D}=r_{2}^{D}=0$, we have that $D_{1}^{up} = D_{1}^{low}$. This is also true for the case where $r_{0}^{D}=0$, which means that the destination can successfully receive at most one packet from the aggregators when both of them transmit. This is known as the capture effect which is common in LoRaWAN \cite{LoRaCapture2017}.
\end{remark}

\section{Numerical Results} \label{sec:Res}
In this section, we provide numerical results to evaluate the presented theoretical performance analysis. For exposition convenience, we consider the case where all sensors have the same link characteristics and transmission probabilities.

We consider the following network setup, the distance between the sensors and the sink is $130m$, the distance between the sensors and the aggregator is $60m$, and the distance between the aggregators and the destination is $80m$. The path loss exponent is assumed $4$, and the transmission power for each sensor is $1mW$ and for each aggregator is $10mW$. We assume Rayleigh block fading channel model.
The values for the SINR threshold are considered in this section are $0.2$, $0.5$, $1.2$, and $2$. In addition, regarding the transmission probabilities, for the sensors we consider the values $t=0.1$, $t=0.2$, and for the aggregators $\alpha=0.8$.

\subsection{Stability Region}

Here we present the closure of the stability region for the queues at the aggregators for all the possible random access probabilities $\alpha$. We consider two cases for the SINR threshold, the case where SINR$=0.5$, which is depicted in Fig. \ref{fig:stability_low}. \textit{We observe that this region is a convex set, thus, the system performs better than time sharing schemes and this is an indication of strong MPR capabilities at the destination}. The stability region for SINR$=1.2$ is depicted in Fig. \ref{fig:stability_high}. \textit{In this case, a time sharing scheme has higher performance since the destination has weak MPR capabilities}.

\begin{figure}[!ht]
	\centering
	\subfigure[SINR$=0.5$]{
		\includegraphics[scale=0.38]{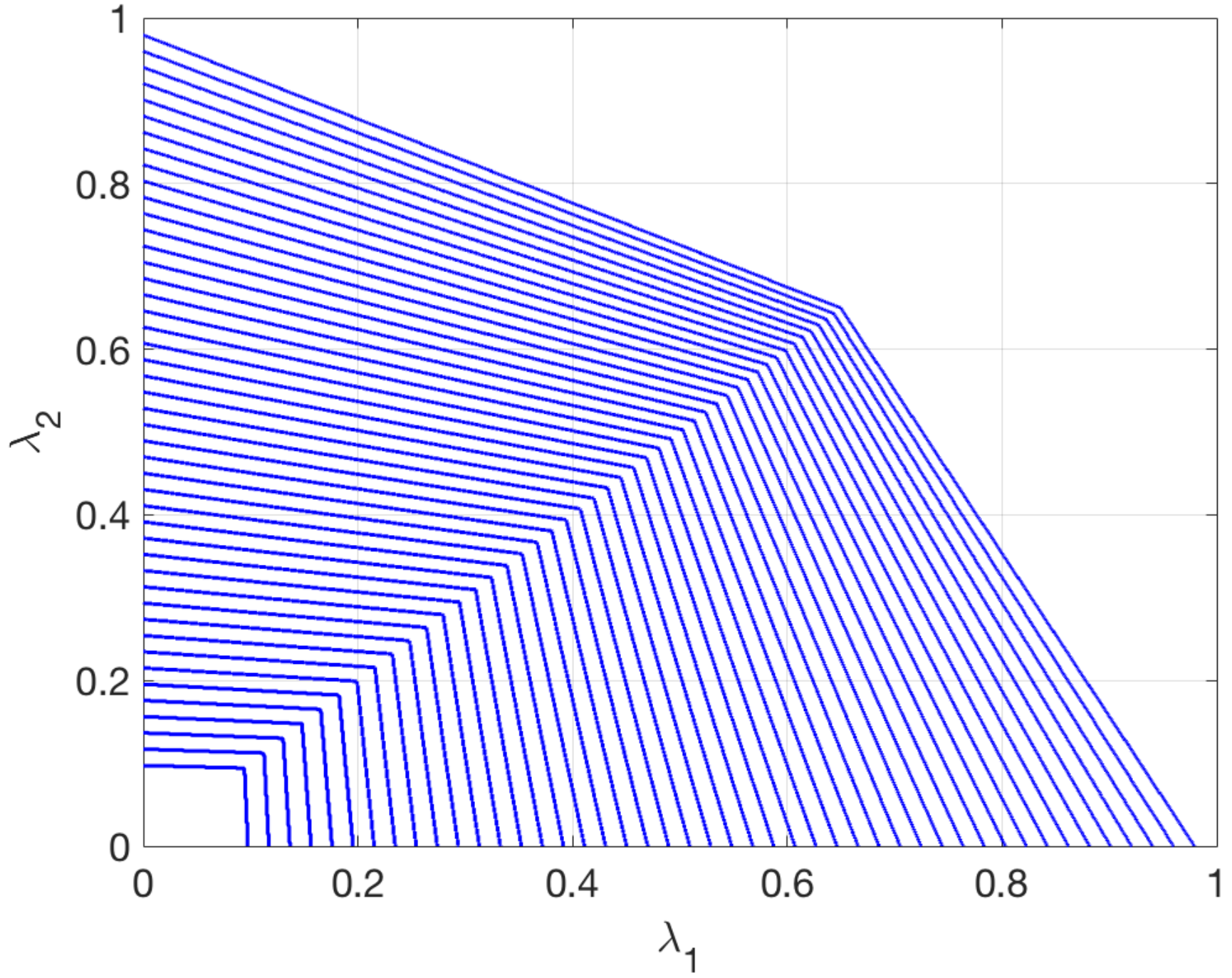}
		\label{fig:stability_low}
	}
	\subfigure[SINR$=1.2$]{
		\includegraphics[scale=0.38]{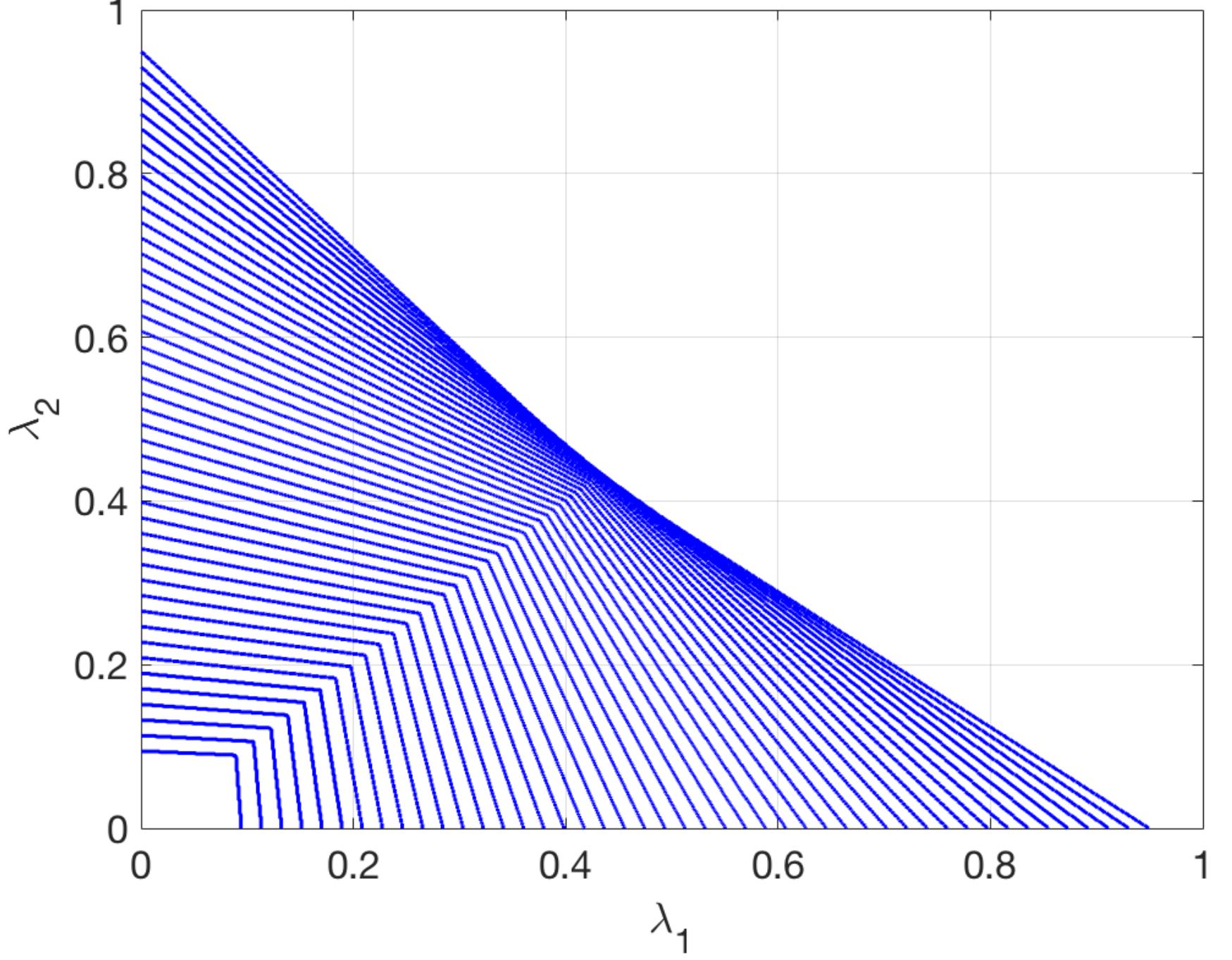}
		\label{fig:stability_high}
	}\vspace{-0.1in}
	\caption{The stability region for the queues at the aggregators.}
\end{figure}

\subsection{Network Throughput}

In this subsection, we present the performance in terms of throughput per sensor and the aggregate throughput of the considered system. We consider four cases for the SNR/SINR threshold,
two cases for the transmission probabilities of the sensors, $t=0.1$ and $t=0.2$, and for the aggregators $\alpha=0.8$. We also plot the throughput of the IoT network without the presence of the aggregators. The case where SINR$ < 1$, is depicted in Fig. \ref{fig:thlow}, in this case, we have strong MPR capabilities at the receivers, thus it is more likely to have more concurrent successful transmissions. Furthermore, we observe that the performance of the IoT network without the assistance of the aggregators can be sufficient for the case with low number of sensors.

\begin{figure}[!ht]
	\centering
	\subfigure[Throughput per sensor]{
		\includegraphics[scale=0.45]{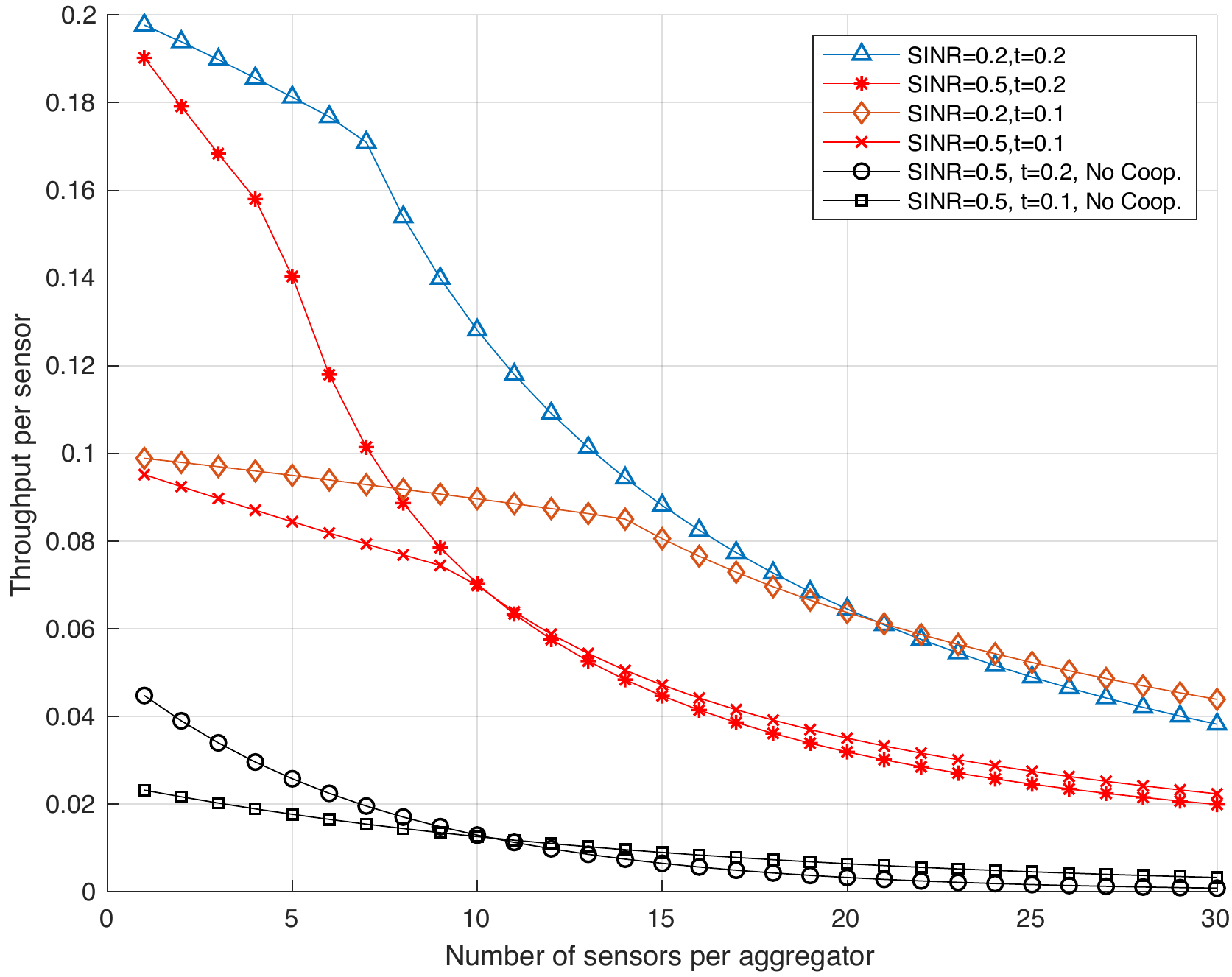}
		\label{fig:thrlow}
	}
	\subfigure[Aggregate Throughput]{
		\includegraphics[scale=0.45]{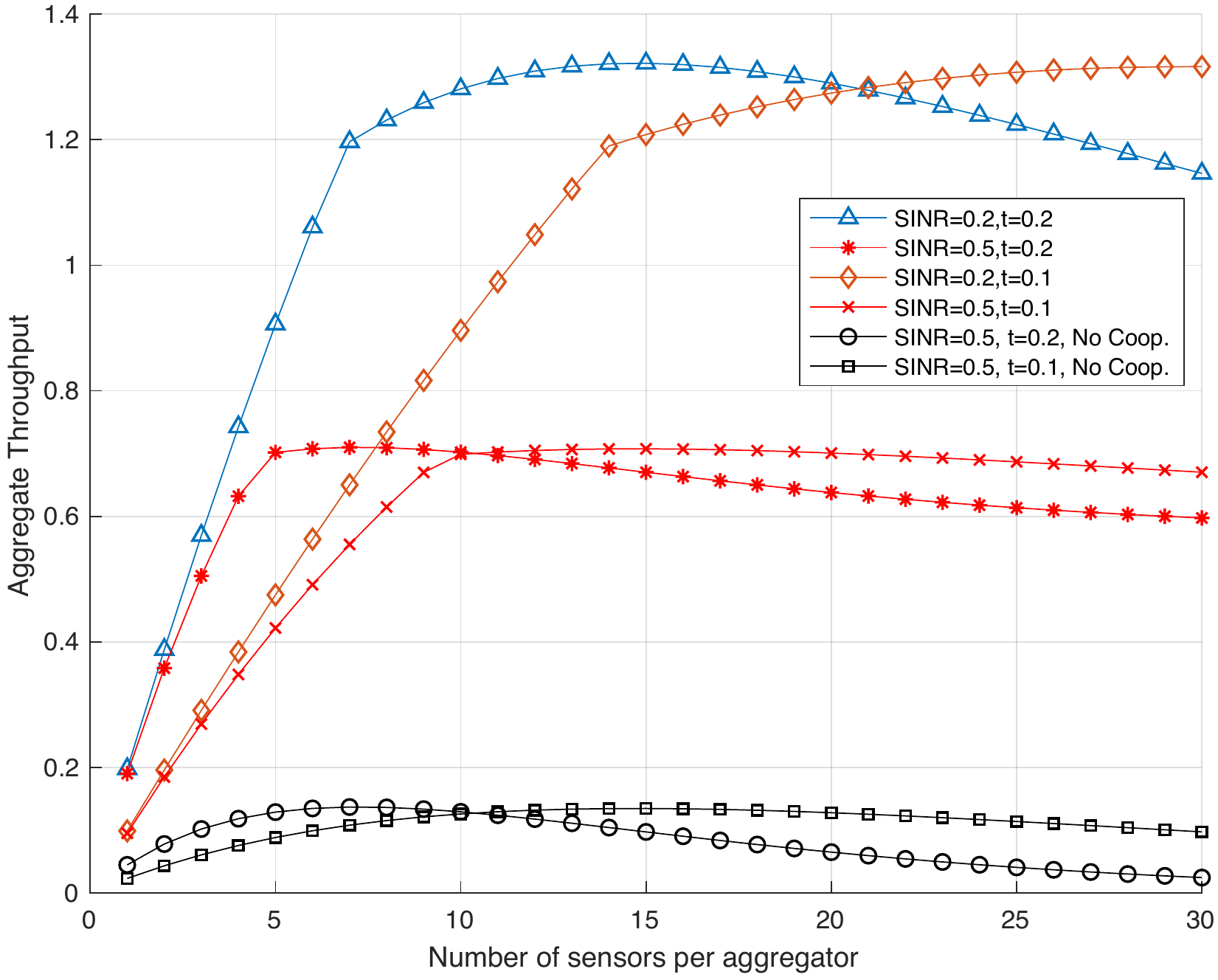}
		\label{fig:athrlow}
	}\vspace{-0.1in}
	\caption{Throughput performance of the IoT network for SINR$ < 1$.} \label{fig:thlow}
\end{figure}

The case where SINR$ > 1$, is depicted in Fig. \ref{fig:thhigh}. In this case, the benefits of the aggregators deployment are prominent. The throughput of the IoT network without the aggregators is almost zero. \textit{The presence of the aggregators that deploy network-level cooperation provides significant gains in the throughput performance of the network.}
\begin{figure}[!ht]
	\centering
	\subfigure[Throughput per sensor]{
		\includegraphics[scale=0.45]{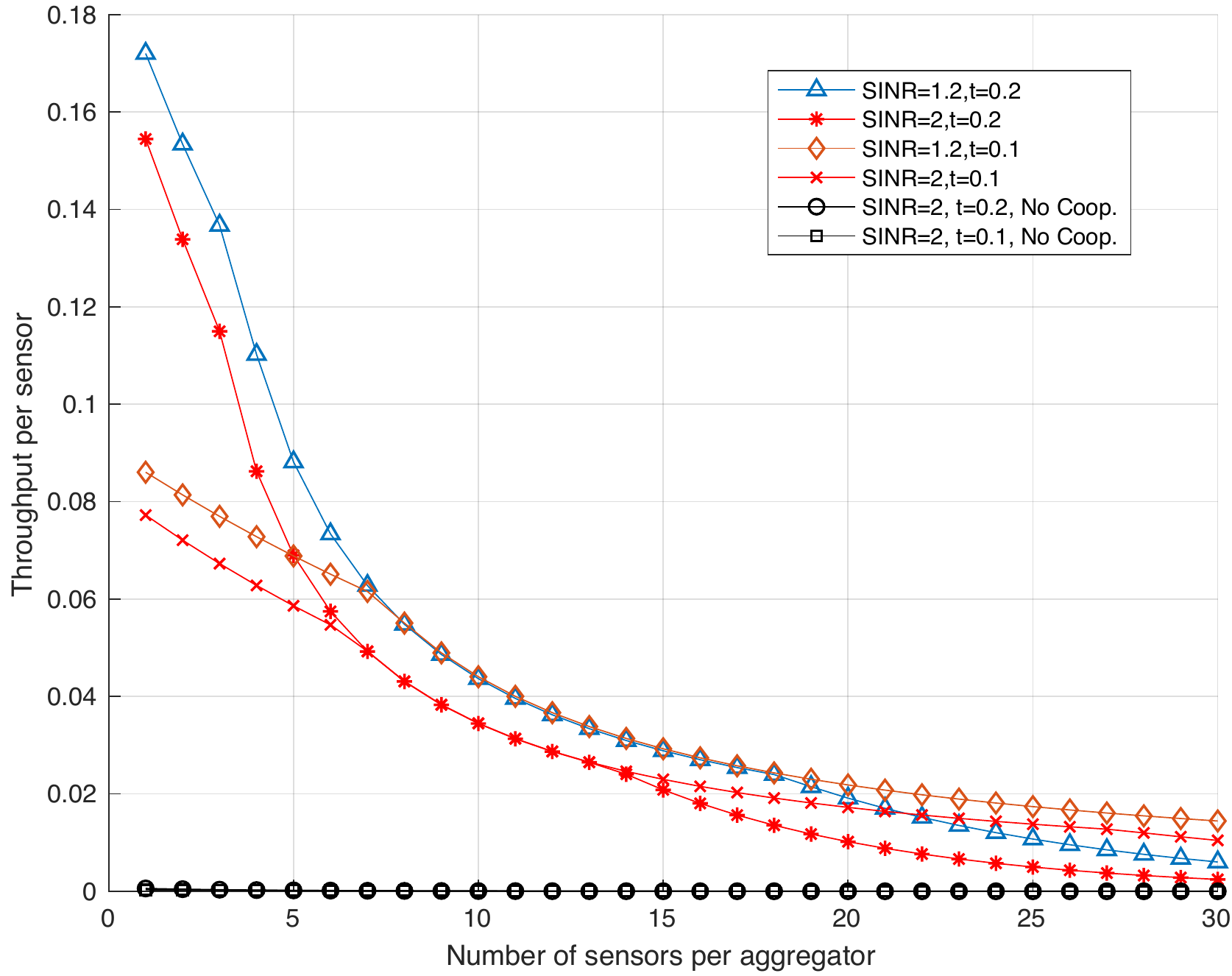}
		\label{fig:thrhigh}
	}
	\subfigure[Aggregate Throughput]{
		\includegraphics[scale=0.45]{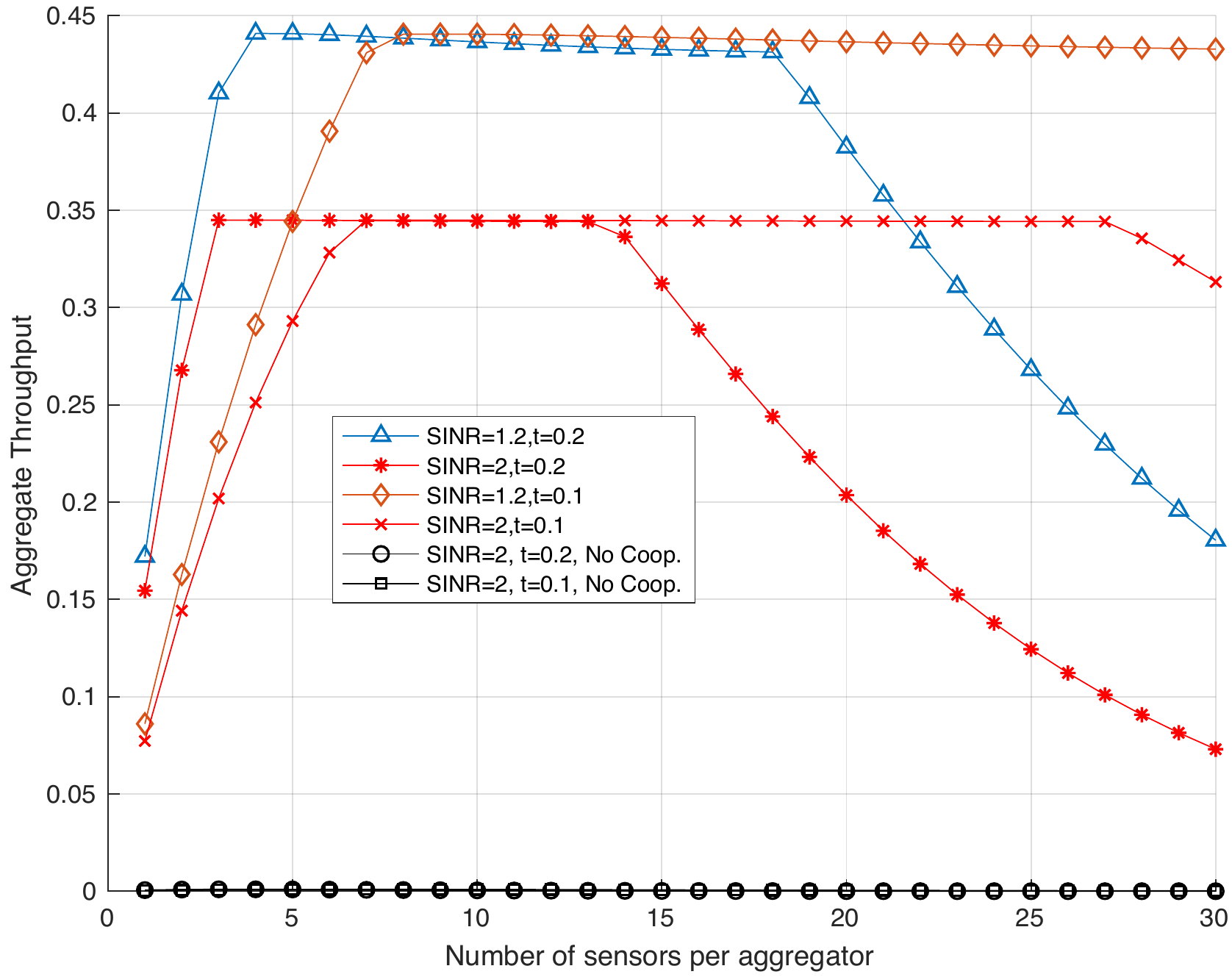}
		\label{fig:athrhigh}
	}\vspace{-0.1in}
	\caption{Throughput performance of the IoT network for SINR$ > 1$.} \label{fig:thhigh}
\end{figure}

In Fig. \ref{fig:rth}, we present the percentage of the traffic that is being relayed. We observe that in almost all cases the majority of the traffic is relayed.
Furthermore, in Fig. \ref{fig:rthrhigh} it becomes apparent that almost all the traffic that ends to the common destination is relayed traffic.

\begin{figure}[!ht]
	\centering
	\subfigure[SINR$ < 1$]{
		\includegraphics[scale=0.45]{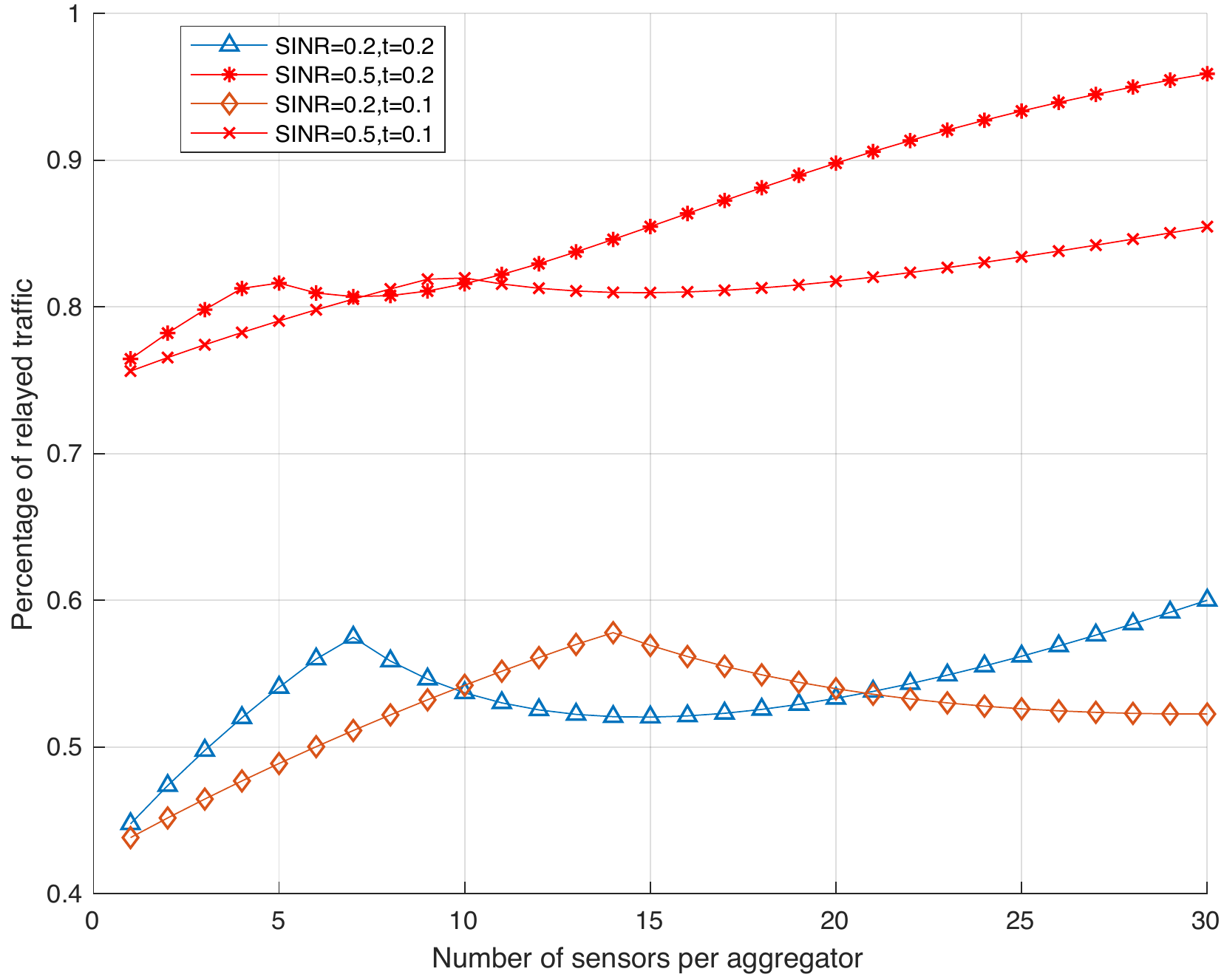}
		\label{fig:rthrlow}
	}
	\subfigure[SINR$ > 1$]{
		\includegraphics[scale=0.45]{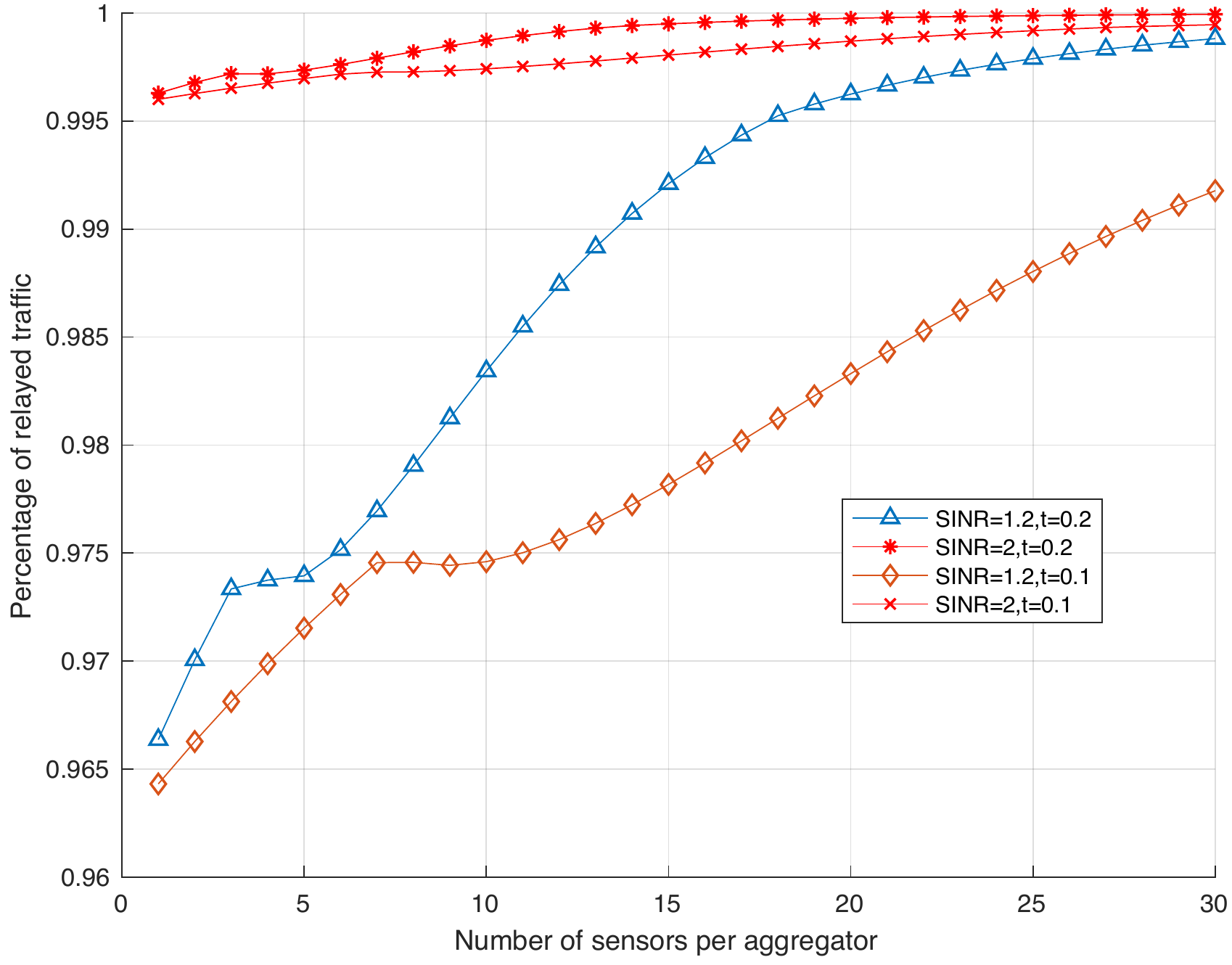}
		\label{fig:rthrhigh}
	}\vspace{-0.1in}
	\caption{Percentage of relayed traffic per sensor.} \label{fig:rth}
\end{figure}

Table \ref{tab:stable_sensors} provides what is the number of sensors per aggregator that can cause its queue stable or unstable.
As expected, when SINR$<1$, we have a clear transition from a stable queue to unstable queue as the number of sensors increase. However, the more interesting case is when SINR$>1$. We observe that for low number of sensors, we start with a stable queue, then the queue becomes unstable, and finally becomes stable again. The reason behind this is that initially, the incoming traffic increases up to a point the queue becomes unstable. Then due to the increased interference, the incoming traffic is reduced significantly thus, the queues are stable.
The shape of the curves presented in Fig. \ref{fig:rth} can be explained by Table \ref{tab:stable_sensors}. These results provide useful guidelines on the number of sensors that can be supported by an aggregator in order to achieve a required network performance regarding throughput.

\begin{table}[h]
\begin{center}
	\renewcommand{\arraystretch}{0.7}
	\caption{State of the queue per aggregator and the number of sensors.}
\begin{tabular}{|c|c|c|}
\hline
 & Stable & Unstable \\\hline
$SINR=0.2, t=0.2$ & $\{1,...,6\}$ & $\{7,...,30\}$ \\\hline
$SINR=0.5, t=0.2$ & $\{1,...,4\}$ & $\{5,...,30\}$ \\\hline
$SINR=0.2, t=0.1$ & $\{1,...,13\}$ & $\{14,...,30\}$ \\\hline
$SINR=0.5, t=0.1$ & $\{1,...,9\}$ & $\{10,...,30\}$ \\\hline \hline
$SINR=1.2, t=0.2$ & $\{1,...,3\},\{19,...,30\} $ & $\{4,...,18\}$ \\\hline
$SINR=2, t=0.2$ & $\{1,2\}, \{14,...,30\}$ & $\{3,...,13\}$ \\\hline
$SINR=1.2, t=0.1$ & $\{1,...,7\}$ & $\{8,...,30\}$ \\\hline
$SINR=2, t=0.1$ & $\{1,...,6\}, \{28,...,30\}$ & $\{7,...,27\}$ \\\hline
\end{tabular}
\label{tab:stable_sensors}
\end{center}
\end{table}

\subsection{Delay Performance}
In Fig. \ref{fig:delay}, the lower and the upper bounds for the average delay per packet versus the arrival rate at the aggregator are depicted for the four cases of SINR threshold, $0.2$, $0,5$, $1.2$ and $2$ respectively. The values of arrival rates are obtained by the stability conditions and we can connect the value of arrival rate with the number of users and their transmission probability through expression \eqref{eq:lambdaju}. Furthermore, we observe that the average delay increases with the increase of the SINR threshold. It is also important to notice that the upper and the lower bounds are very close, as an indication that the obtained bounds are tight.

\begin{figure}[!ht]
\centering
\includegraphics[scale=0.5]{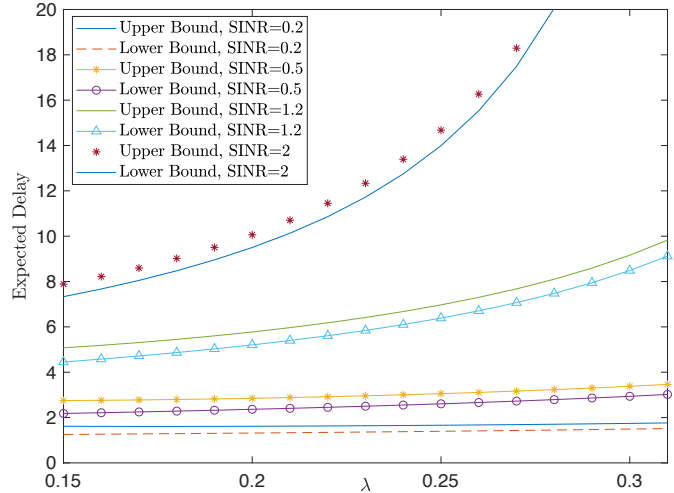}\vspace{-0.1in}
\caption{Average delay performance versus average arrival rate for various values of the SINR threshold for the case $\alpha=0.8$.}
\label{fig:delay}
\end{figure}

\section{Conclusions and Future Directions} \label{sec:conclusions}

In this work, we considered a random access IoT wireless network assisted by two aggregators providing support for data collection by applying network level cooperation. We characterized the throughput performance of the IoT network and obtained the stability conditions for the queues at the aggregators, which guarantee finite queueing delay. Furthermore, by applying the theory of boundary value problems we provided a detailed analysis for the delay. We showed that the benefits for the deployment of aggregators are more profound for higher SINR threshold and when the MPR capability of the network is weak. Our analytical results provide useful design guidelines for deploying aggregators in random access wireless IoT networks.

The suggested framework in this work can be extended to capture the case of cognitive aggregators that are adapting according to the incoming traffic from the sensors and also the channel conditions.
Considering adaptation of the operation of the aggregator according to the incoming traffic, can provide also performance improvements related to power consumption and energy efficiency, which is also another important factor for the design of IoT networks. In addition, it will be of further interest to consider delay-aware operation protocols for the IoT network and using successive interference cancellation to mitigate interference among the aggregators at the destination. The case of multiple aggregators in the same coverage area is an interesting future direction, our approach can also be applied but with some modifications.

\appendices

\section{Proof of Theorem \ref{thm1}} \label{sec:stability_proof}
To determine the stability region of the aggregators in our network, we apply the stochastic dominance technique \cite{Rao_TIT1988}. More specifically, we construct hypothetical dominant systems, in which an aggregator transmits dummy packets for the packet queue that is empty, while for the non-empty queue it transmits according to its traffic. Under this approach, we consider the $\mathcal{D}_{1}$, and $\mathcal{D}_{2}$-dominant systems. In the $\mathcal{D}_{k}$ dominant system, whenever the queue of aggregator $k$, $k=1,2$ empties, it transmits a dummy packet instead.

Thus, in $\mathcal{D}_{1}$, the first aggregator never empties, and hence, the second aggregator sees a constant service rate, while the service rate of aggregator $1$ depends on the state of aggregator $2$, i.e., empty or not. We proceed with the queue at aggregator $1$, its service rate is given by
\begin{equation} \label{eq:mu1}
\begin{array}{l}
\mu_1=Pr(N_2 \neq 0) \left[\alpha_1\bar{\alpha}_{2}p_{R_{1},\{R_{1}\}}^{D}+\alpha_1 \alpha_2 p_{R_1/R_1,R_2}^{D}\right] + Pr(N_2 = 0) \alpha_{1} p_{R_{1},\{R_{1}\}}^{D}.
\end{array}
\end{equation}
The service rate of the second aggregator is given by
\begin{equation} 
\begin{array}{c}
\mu_2=\alpha_2\bar{\alpha}_{1}p_{R_{2},\{R_{2}\}}^{D}+\alpha_1 \alpha_2 p_{R_2/R_1,R_2}^{D}.
\end{array}
\end{equation}
By applying Loyne's criterion, the second node is stable if and only if the average arrival rate is less that the average service rate, $\lambda_{2} < \alpha_2\bar{\alpha}_{1}p_{R_{2},\{R_{2}\}}^{D}+\alpha_1 \alpha_2 p_{R_2/R_1,R_2}^{D}$. We can obtain the probability that the second node is empty and is given by $Pr(N_2 = 0)=1-\frac{\lambda_{2}}{\mu_2}$. After replacing $Pr(N_2 = 0)$ into \eqref{eq:mu1}, and applying Loynes criterion we can obtain the stability condition for the first node. Then, we have the stability region $\mathcal{R}_1$ given by \eqref{eq:stabCond} for $i=1$. 
Similarly, we can obtain the stability region for the second dominant system $\mathcal{R}_2$. For a detailed treatment of dominant systems please refer to \cite{Rao_TIT1988}.

An important observation made in \cite{Rao_TIT1988} is that the stability conditions obtained by the stochastic dominance technique are not only sufficient but also necessary for the stability of the original system. The \emph{indistinguishability} argument \cite{Rao_TIT1988} applies here as well. Based on the construction of the dominant system, we can see that the queue sizes in the dominant system are always greater than those in the original system, provided they are both initialized to the same value and the arrivals are identical in both systems. Therefore, given $\lambda_{2}<\mu_{2}$, if for some $\lambda_{1}$, the queue at the first user is stable in the dominant system, then the corresponding queue in the original system must be stable. Conversely, if for some $\lambda_{1}$ in the dominant system, the queue at the first node saturates, then it will not transmit dummy packets, and as long as the first user has a packet to transmit, the behavior of the dominant system is identical to that of the original system since dummy packet transmissions are eliminated as we approach the stability boundary. Therefore, the original and the dominant system are indistinguishable at the boundary points.

\section{Proof of Lemma \ref{LEM}}\label{a0}
It is readily seen that $R(x,y)=\frac{xy-\Psi(x,y)}{xy}$, where $\Psi(x,y)=L(x,y)[xy+y(1-x)\alpha_{1}\widehat{\alpha}_{2}+x(1-y)\alpha_{2}\widehat{\alpha}_{1}]$, where for $|x|\leq1$, $|y|\leq1$, $\Psi(x,y)$ is a generating function of a proper probability distribution. Now, for $|y|=1$, $y\neq1$ and $|x|=1$ it is clear that $|\Psi(x,y)|<1=|xy|$. Thus, a direct application of Rouch\'e's theorem states that, $xy-\Psi(x,y)$ has exactly one zero inside the unit circle. Therefore, $R(x,y)=0$ has exactly one root $x=X_{0}(y)$, such that $|x|<1$. For $y=1$, $R(x,1)=0$ implies
$(1-x)\left(\lambda_{1}+\lambda_{1}\frac{\alpha_{1}\widehat{\alpha}_{2}(1-x)}{x}-\frac{\alpha_{1}\widehat{\alpha}_{2}}{x} \right)=0.$
Therefore, for $y=1$, and since $\lambda_{1}<\alpha_{1}\widehat{\alpha}_{2}$, the only root of $R(x,1)=0$ for $|x|\leq1$, is $x=1$.\hfill$\square$
\section{Proof of Lemma \ref{SQ}}\label{a1}
For $y\in[y_{1},y_{2}]$, $D_{y}(y)$ is negative, so $X_{0}(y)$, $X_{1}(y)$ are complex conjugates. Therefore, $|X(y)|^{2}=\frac{\widehat{c}(y)}{\widehat{a}(y)}=g(y)$. Clearly, $g(y)$ is an increasing function for $y\in[0,1]$ and thus, $|X(y)|^{2}\leq g(y_{2})=\beta_{0}$. Using simple algebraic considerations we can prove that, $X_{0}(y_{1}):=\beta_{1}=-g(y_{1})$ is the extreme left point of $\mathcal{M}$. Finally, $\zeta(\delta)$ is derived by solving $Re(X(y))=-\widehat{b}(y)/2\widehat{a}(y)$ for $y$ with $\delta = Re(X(y))$, and taking the solution such that $y\in[0,1]$.\hfill$\square$
\section{Construction of the conformal mappings}\label{conf}
Our goal in order to to obtain expressions for the basic performance metrics is to construct the mapping $\gamma_{0}(z)$. To proceed, we need a representation of $\mathcal{M}$ in polar coordinates, i.e., $\mathcal{M}=\{x:x=\rho(\phi)\exp(i\phi),\phi\in[0,2\pi]\}.$

In the following, we summarize the basic steps: Since $0\in G_{\mathcal{M}}$, for each $x\in\mathcal{M}$, a relation between its absolute value and its real part is given by $|x|^{2}=m(Re(x))$ (see Lemma \ref{SQ}). Given the angle $\phi$ of some point on $\mathcal{M}$, the real part of this point, say $\delta(\phi)$, is the zero of $\delta-\cos(\phi)\sqrt{m(\delta)}$, $\phi\in[0,2\pi].$ Since $\mathcal{M}$ is a smooth, egg-shaped contour, the solution is unique. Clearly, $\rho(\phi)=\frac{\delta(\phi)}{\cos(\phi)}$, and the parametrization of $\mathcal{M}$ in polar coordinates is fully specified.

 Then, the mapping from $z\in G_{\mathcal{C}}$ to $x\in G_{\mathcal{M}}$, where $z = e^{i\phi}$ and $x= \rho(\psi(\phi))e^{i\psi(\phi)}$, satisfying $\gamma_{0}(0)=0$, $\gamma_{0}(z)=\overline{\gamma_{0}(\bar{z})}$ is uniquely determined by (see \cite{coh}, Section I.4.4),
\begin{equation}
\begin{array}{rl}
\gamma_{0}(z)=&z\exp[\frac{1}{2\pi}\int_{0}^{2\pi}\log\{\rho(\psi(\omega))\}\frac{e^{i\omega}+z}{e^{i\omega}-z}d\omega],\,|z|<1,\\
\psi(\phi)=&\phi-\int_{0}^{2\pi}\log\{\rho(\psi(\omega))\}\cot(\frac{\omega-\phi}{2})d\omega,\,0\leq\phi\leq 2\pi,
\end{array}
\label{zx}
\end{equation}
i.e., $\psi(.)$ is uniquely determined as the solution of a Theodorsen integral equation with $\psi(\phi)=2\pi-\psi(2\pi-\phi)$. This integral equation has to be solved numerically by an iterative procedure. For the numerical evaluation of the integrals we split the interval $[0,2\pi]$ into $M$ parts of length $2\pi/M$, by taking $M$ points $\phi_{k}=\frac{2k\pi}{M}$, $k=0,1,...,M-1$. For the $M$ points given by their angles $\left\{\phi_{0},...,\phi_{M-1}\right\}$ we should solve the second in (\ref{zx}) to obtain the corresponding points $\left\{\psi(\phi_{0}),...,\psi(\phi_{M-1})\right\}$, iteratively from,
\begin{equation}
\psi_{0}(\phi_{k})=\phi_{k},
\psi_{n+1}(\phi_{k})=\phi_{k}-\frac{1}{2\pi}\int_{0}^{2\pi}\log\left\{\frac{\delta(\psi_{n}(\omega))}{\cos(\psi_{n}(\omega))}\right\}\cot\left(\frac{\omega-\phi_{k}}{2}\right)d\omega,
\label{uip}
\end{equation}
where $\lim_{n\to\infty}\psi_{n+1}(\phi)=\psi(\phi)$, and $\delta(\psi_{n}(\omega))$ is determined by, $\delta(\psi_{n}(\omega))=cos(\psi_{n}(\omega))\sqrt{m(\delta(\psi_{n}(\omega)))}$, using the Newton-Raphson root finding method. For each step, the integral in (\ref{uip}) is
numerically determined by again using the trapezium rule with $M$ parts of equal length
$2\pi/M$. For the iteration, we have used the following stopping criterion $\max_{k\in\left\{0,1,...,M-1\right\}}\left|\psi_{n+1}(\phi_{k})-\psi_{n}(\phi_{k})\right|<10^{-6}$

After obtaining $\psi(\phi)$ numerically, the values of the conformal mapping $\gamma_{0}(z)$, $\left|z\right|\leq 1$, can be calculated by applying the Plemelj-Sokhotski formula to the first in (\ref{zx}) for $0\leq\phi\leq 2\pi$,
\begin{displaymath}
\begin{array}{rl}
\gamma_{0}(e^{i\phi})=\frac{e^{i\psi(\phi)}\delta(\psi(\phi))}{\cos(\psi(\phi))}=\delta(\psi(\phi))[1+i \tan(\psi(\phi))].
\end{array}
\end{displaymath}
We can find $\gamma(1)$, $\gamma^{\prime}(1)$ by applying the Newton's method and solving $\gamma_{0}(z_{0})=1$, in $[0,1]$, i.e., $z_{0}$ is the zero in $[0,1]$ of $\gamma_{0}(z)=1$. Then, $\gamma(1)=z_{0}$. Moreover, using the first in (\ref{zx})
\begin{equation}
\begin{array}{rl}
\gamma^{\prime}(1)=&(\gamma_{0}^{\prime}(z_{0}))^{-1}=\left\lbrace\frac{1}{\gamma(1)}+\frac{1}{2\pi i}\int_{0}^{2\pi}\frac{\log\{\rho(\psi(\omega))\}2e^{i\omega}}{(e^{i\omega}-\gamma(1))^{2}}d\omega\right\rbrace^{-1},
\end{array}
\label{cv}
\end{equation}
which can be obtained numerically by using the trapezoidal rule for the integral on the right-hand side of (\ref{cv}).

\bibliographystyle{IEEEtran}
\bibliography{bibliography}
\end{document}